\documentclass [11pt] {article}

 \usepackage{fullpage}

\usepackage{graphics}
\usepackage[dvips]{epsfig}

\usepackage{amsmath,amsthm}
\usepackage{amssymb}
\usepackage{amsfonts}
\usepackage{graphicx}
\usepackage{xspace}

\usepackage{algorithm,algorithmicx}

  \newcommand{\algindent}{\hspace{\algorithmicindent}}

\usepackage{comment}
\usepackage{algpseudocode}
\usepackage{xcolor}

\usepackage{cite}

\usepackage{array}
\usepackage{color}

\usepackage[english]{babel}

\usepackage{multirow}
\usepackage{rotating}

\newtheorem{theorem}{Theorem}[section]
\newtheorem{lemma}[theorem]{Lemma}
\newtheorem{corollary}[theorem]{Corollary}

\newtheorem{obs}[theorem]{Observation}

\newtheorem{fact}[theorem]{Fact}

\newcommand{\dom}[1]{\ensuremath{\textrm{DOM}_{#1}}\xspace}
\newcommand{\new}[1]{\ensuremath{\textrm{NEW}_{#1}}\xspace}

\newcommand{\laststage}{\ensuremath{\ell}\xspace}
\newcommand{\inform}[1]{\ensuremath{\textrm{INF}_{#1}}\xspace}
\newcommand{\uninf}[1]{\ensuremath{\textrm{UNINF}_{#1}}\xspace}
\newcommand{\frontier}[1]{\ensuremath{\textrm{FRONTIER}_{#1}}\xspace}

\newcommand{\cB}{{\cal B}}

\newcommand{\cL}{{\cal L}}






\begin{document}
\def\thefootnote{\fnsymbol{footnote}}

\title{{\bf Constant-Length Labeling Schemes for \\
Deterministic Radio Broadcast}}

\author{Faith Ellen\footnotemark[4]
\and Barun Gorain\footnotemark[1]
\and Avery Miller\footnotemark[2]
\and Andrzej Pelc\footnotemark[3]
}

\footnotetext[4]{Department of Computer Science, University of Toronto, {\tt faith@cs.toronto.edu}. Partially supported by NSERC Discovery Grant RGPIN--2015--05080.}
\footnotetext[1]{Indian Institute of Information Technology Vadodara, {\tt barun@iitbhilai.ac.in}.}
\footnotetext[2]{Department of Computer Science, University of Manitoba, \texttt{avery.miller@umanitoba.ca}. Partially supported by NSERC Discovery Grant RGPIN--2017--05936.}
\footnotetext[3]{
 D\'epartement d'informatique, Universit\'e du Qu\'ebec en Outaouais, {\tt pelc@uqo.ca}. Partially supported by NSERC Discovery Grant RGPIN--2013--08136
and by the Research Chair in Distributed Computing at the
Universit\'e du Qu\'ebec en Outaouais.}

\maketitle

\thispagestyle{empty}

\begin{abstract}
Broadcast is one of the fundamental network communication primitives. One node of a network, called the {\em source}, has a message that has to be learned by all other nodes. We consider broadcast in radio networks, modeled as simple undirected connected graphs with a distinguished source. Nodes communicate in synchronous rounds. In each round, a node can either transmit a message to all its neighbours, or stay silent and listen. At the receiving end, a node $v$ hears a message from a neighbour $w$ in a given round if $v$ listens in this round and if $w$ is its only neighbour that transmits in this round. If more than one neighbour of a node $v$ transmits in a given round, we say that a {\em collision} occurs at $v$. We do not assume collision detection: in case of a collision, node $v$ does not hear anything (except the background noise that it also hears when no neighbour transmits).

We are interested in the feasibility of deterministic broadcast in radio networks. If nodes of the network do not have any labels, deterministic broadcast is impossible even in the four-cycle. On the other hand, if all nodes have distinct labels, then broadcast can be carried out, e.g., in a round-robin fashion, and hence $O(\log n)$-bit labels are sufficient for this task in $n$-node networks. In fact, $O(\log \Delta)$-bit labels, where $\Delta$ is the maximum degree, are enough to broadcast successfully. Hence, it is natural to ask if very short labels are sufficient for broadcast. Our main result is a positive answer to this question. We show that every radio network can be labeled using 2 bits in such a way that broadcast can be accomplished by some universal deterministic algorithm that does not know the network topology nor any bound on its size. Moreover, at the expense of an extra bit in the labels, we can get the following additional strong property of our algorithm: there exists a common round in which all nodes know that broadcast has been completed. {Finally, we show that 3-bit labels are also sufficient to solve both versions of broadcast in the case where the labeling scheme does not know which node is the source.}

\vspace*{0.5cm}

\noindent
{\bf keywords:} broadcast, radio network, labeling scheme, feasibility

\vspace*{0.5cm}
\end{abstract}

\pagebreak

\section{Introduction}

\subsection{The model and the problem}

Broadcast is one of the fundamental and most extensively studied network communication primitives. 
One node of a network, called the {\em source}, has a message that
has to be learned by all other nodes. We consider broadcast in radio networks, modeled as simple undirected connected graphs with a distinguished source. In the sequel, we use the word {\em graph} in this sense, and we consider the notions of {\em network} and {\em graph} as synonyms.
Nodes communicate in synchronous rounds. Throughout the paper, round numbers refer to the local time at the source, which can differ from the local time at other nodes. In each round, a node can either transmit a message to all its neighbours, or stay silent and listen. At the receiving end, a node $v$ hears a message from a neighbour $w$ in a given round if $v$ listens in this round and if $w$ is its only neighbour that transmits in this round.
If more than one neighbour of a node $v$ transmits in a given round, we say that a {\em collision} occurs at $v$.
We do not assume collision detection: in case of a collision, node $v$ does not hear anything
(except the background noise that it also hears when no neighbour transmits). If collision detection is available, broadcast is trivially feasible, even in anonymous networks: consecutive bits
of the source message can be transmitted by a sequence of silent and noisy rounds, cf.  \cite{CGGPR}, using silence as 0 and a message or collision as 1.

We are interested in the feasibility of deterministic broadcast in radio networks. If the nodes of the network do not have any labels {(or all have the same label)}, then deterministic broadcast is impossible even in the four-cycle. Indeed, the two neighbours of the source must behave identically, i.e., transmit in exactly the same rounds, and hence, due to collisions, the fourth node can never
hear a message.
On the other hand, if all nodes have distinct labels, then broadcast can be carried out, e.g., in a round-robin fashion, and hence $O(\log n)$-bit labels are sufficient for this task in $n$-node networks. It is easy to see that, by using a proper colouring of the square of the graph, $O(\log \Delta)$-bit labels, where $\Delta$ is the maximum degree, are enough to successfully broadcast. Hence, it is natural to ask if very short labels are sufficient for deterministic broadcast. In particular, is it possible to broadcast in every radio network
using labels of constant length? Below we formalize our question.

A {\em labeling scheme} for a network represented by a graph $G=(V,E)$ is any function $\cL$ from the set $V$ of nodes into the set $S$ of finite binary strings. The string $\cL(v)$ is called the \emph{label} of the node $v$.
Note that labels assigned by a labeling scheme are not necessarily distinct. The {\em length} of a labeling scheme $\cL$ is the maximum length of any label assigned by it.

Consider all graphs $G$, each labeled by some labeling scheme, with a distinguished source $s_G$. Initially, each node knows only its own label, and the source has a message. A {\em universal deterministic broadcast algorithm} works in synchronous rounds as follows. In each round, every node makes a decision if it should transmit or listen. This decision is based only on the current history of the node, which consists of the label of the node and the sequence of messages heard by the node so far. In particular, the decision does not depend on any knowledge of the graph $G$, including its size. However, the labeling scheme can use complete knowledge of the graph. Upon completion of the algorithm, all nodes should have the source message.
For simplicity, we assume that when a node transmits, it can transmit its entire history
(which may include the source message). However, in our algorithm, much smaller messages will suffice: they consist of either the source message or a constant-size ``stay" message.

We also consider a variant of the above problem called \emph{acknowledged broadcast}, which requires that the source node eventually knows that all nodes have received the source message. In our algorithm for this version of the problem, each transmitted message additionally contains a binary string of length $O(\log n)$, where $n$ is the size of the graph. One of the roles of this string is to implement a global clock. More specifically, in our algorithms, a node transmits only in response to receiving a message, and hence the current round number (which is the current local round number at the source node) can be maintained by including it in each transmitted message and incrementing it appropriately. Using our algorithm for acknowledged broadcast, we can ensure that there is a common round in which all nodes know that the broadcast of the source's message has been completed.

Using the above terminology, our central question can be formulated as follows:

\begin{quotation}
	Does there exist a universal deterministic (acknowledged) broadcast algorithm using labeling schemes of constant length for all radio networks? 
\end{quotation}

The above question can be seen in the framework of
algorithms using {\em informative labeling schemes}, or equivalently, algorithms with {\em advice} \cite{AKM01,CFIKP,DP,EFKR,FGIP,FIP1,FIP2,FKL,FPP,GPPR02,IKP,KKKP02,KKP05,SN}. 
When advice is given to nodes,  two variations are considered: either the binary string given to nodes is the same for all of them \cite{GMP} or different strings are given to different nodes
\cite{FKL,FPP}, as in the case
of the present paper. If strings may be different, they can be considered as labels assigned to nodes.
Several authors have studied the minimum amount of advice (i.e. label length) required to solve certain
network problems. The framework of advice or labeling schemes is useful for quantifying the amount of information used to solve a network problem, regardless of the type of information that is provided.

\subsection{Our results}
Our main contribution is a positive answer to our central question. For every radio network, we construct labeling schemes of constant length, and we design
universal deterministic broadcast and acknowledged broadcast algorithms using these schemes. For the broadcast task, our labeling schemes have length 2, while for acknowledged broadcast, our labeling schemes have length 3. {In the more difficult situation where the source node is not known at the time of labeling, our labeling scheme has length 3 (for both versions of broadcast).}

The importance of our result can be shown in the following scenario. Suppose that transmitting devices that form a radio network are already deployed, and only a central monitor knows the location and  the transmitting  range of each of them, thus knowing the topology of the resulting network. This could be applicable in an Internet of Things network in a business or industrial complex. 
One node of this network has to broadcast many consecutive messages to all other nodes. Then the monitor can assign very short labels to the devices, enabling multiple executions of the universal broadcast. The fact that labels can be very short may be crucial in situations when nodes of the network are weak and simple devices with very limited memory. Moreover, the fact that we can also do acknowledged broadcast in this situation permits the source to send the next message only after all nodes received the preceding one. 
Our work is also relevant in the context of Software-Defined Networks (SDNs) where the central controller assigns to each network device a role, i.e., a forwarding behaviour. Our solution gives an efficient implementation for broadcast that requires very few roles as well as simple forwarding rules.

\subsection{Related work}

Algorithmic problems in radio networks modeled as graphs were studied for such tasks as broadcast \cite{CGR,GPX1}, gossiping \cite{CGR,GPPR} and leader election
\cite{KP}. In some cases \cite{CGR,GPPR}, the topology of the network was unknown, in others \cite{GPX1}, nodes were assumed to have a labeled map of the network and could situate themselves in it.

For the broadcast task, most of the papers represented radio networks as
arbitrary (undirected or directed) graphs.  Models used in the
literature about algorithmic aspects of radio communication, starting
from \cite{CK}, differ mostly in the amount of information
about the network that is assumed available to nodes. However,
assumptions about this knowledge concern particular items of
information, such as the knowledge of the size of the network, its
diameter, maximum degree, or some neighbourhood around the nodes. In this paper, we adopt the approach of  limiting the total number of bits available to nodes,
regardless of their meaning.

Deterministic centralized broadcast assuming complete knowledge of
the network was considered in \cite{CW}, where a
polynomial-time algorithm constructing a $O(D \log ^2 n)$-time
broadcast scheme was given for all $n$-node networks of
radius~$D$. Subsequent improvements by many authors \cite{EK3,GM,GPX1}
were followed by the polynomial-time algorithm from \cite{KP5}
constructing a $O(D + \log ^2 n)$-time broadcast scheme, which is
optimal.  On the other hand, in \cite{ABLP}, the authors proved the
existence of a family of $n$-node networks of radius 2 for which any
broadcast algorithm requires time $\Omega (\log ^2 n)$. The ``minimal dominating sets" that appear in our work were used under the name ``minimal covering sets" in the context of deterministic centralized broadcast and gossiping assuming complete knowledge of the network \cite{GPX1,GPX2}.

One of the first papers to study deterministic distributed
broadcast in {\hyphenation{radio}radio} networks whose nodes have only
limited knowledge of the topology was~\cite{BGI}. The authors assumed
that nodes know only their own identifier and the identifiers of their neighbours.
Many authors \cite{BD,CGGPR,CGOR,CGR,CMS} studied deterministic
distributed broadcast in radio networks under the assumption that
nodes know only their own identifier (but not the identifiers of their neighbours).
In \cite{CGGPR}, the authors gave a broadcast algorithm working in time $O(n)$ for 
undirected $n$-node graphs, assuming that the nodes can transmit spontaneously before getting the source message.
For this model, a matching lower bound $\Omega(n)$ on deterministic broadcast time 
was proved in \cite{KP3}, even for the class of networks 
of constant diameter.
Increasingly faster broadcast algorithms working for arbitrary radio
networks were constructed, the currently fastest being the $O(n \log D \log\log D)$
algorithm from \cite{CzD}.  On the other hand, in \cite{CMS}, a lower
bound $\Omega(n \log D)$ on broadcast time was proved for $n$-node
networks of radius $D$.

Randomized broadcast algorithms in radio networks have also been studied \cite{BGI,KM}. For these algorithms, no topological knowledge
of the network and no labels of nodes were 
assumed. In \cite{BGI}, the authors showed a randomized broadcast
algorithm running in expected time $O(D \log n + \log ^2 n)$. 
In~\cite{KM}, it was shown that for any randomized broadcast
algorithm and parameters $D < n$, there exists an $n$-node network
of radius $D$ requiring expected time $\Omega(D \log(n/D))$ to execute
this algorithm.  It should be noted that the lower bound $\Omega(\log^2 n)$ 
from \cite{ABLP}, for some networks of radius 2, holds for
randomized algorithms as well. A randomized algorithm working in
expected time $O(D \log(n/D) + \log ^2 n)$, and thus matching the
above lower bounds, was presented in \cite{CR,KP4}.

Many papers \cite{AKM01,CFIKP,DP,EFKR,FGIP,FIP1,FIP2,FKL,FPP,GPPR02,KKKP02,KKP05,SN} have proposed algorithms to solve network tasks more efficiently by providing arbitrary information to nodes of the network or mobile agents circulating in it. These are known as algorithms using {\em informative labeling schemes} or algorithms with {\em advice}. Most relevant to this paper are those concerning radio networks. In \cite{IKP}, the authors considered the set of radio networks in which it is possible to perform broadcast in constant time when each node has complete knowledge of the network. They proved that $O(n)$ bits of advice are sufficient for performing broadcast in constant time in such networks and $\Omega(n)$ bits are necessary. Short labeling schemes have been found that can be used to perform topology recognition in radio networks modeled by trees \cite{GP1} and to perform size discovery in arbitrary radio networks with collision detection \cite{GP2}.

\section{Broadcast}\label{broadcasting}
In this section, we present a labeling scheme $\lambda$ that labels each node with a 2-bit string, and give a deterministic algorithm $\cB$ that solves broadcast on any graph $G$ that has been labeled using $\lambda$.

At a high level, broadcast is completed by having a set of ``informed" nodes, i.e., those that know the source message, that grows every two rounds. In odd-numbered rounds, we consider the set of ``frontier" nodes, i.e., uninformed nodes that are each adjacent to at least one informed node. From among the informed nodes, a minimal set of nodes that dominates the frontier nodes transmits the source message. Some of the frontier nodes will become newly-informed via these transmissions, while others will not, due to collisions. In even-numbered rounds, some of the newly-informed nodes will transmit a ``stay" message to inform certain nodes to stay in the dominating set for the next round. The first bit, $x_1$, of the label of a newly-informed node is used to determine whether or not it is added to the dominating set. The second bit, $x_2$, is used to determine whether or not it sends a ``stay" message. The formal description of our broadcast algorithm $\cB$ with source message $\mu$ is provided in Algorithm \ref{bcastpseudo}. We assume that there is a special ``stay" message that is distinct from the source message. Figure \ref{example} gives an example of an execution of $\cB$.

\begin{figure}[!ht]
	
	\begin{center}
		\includegraphics[scale=0.5]{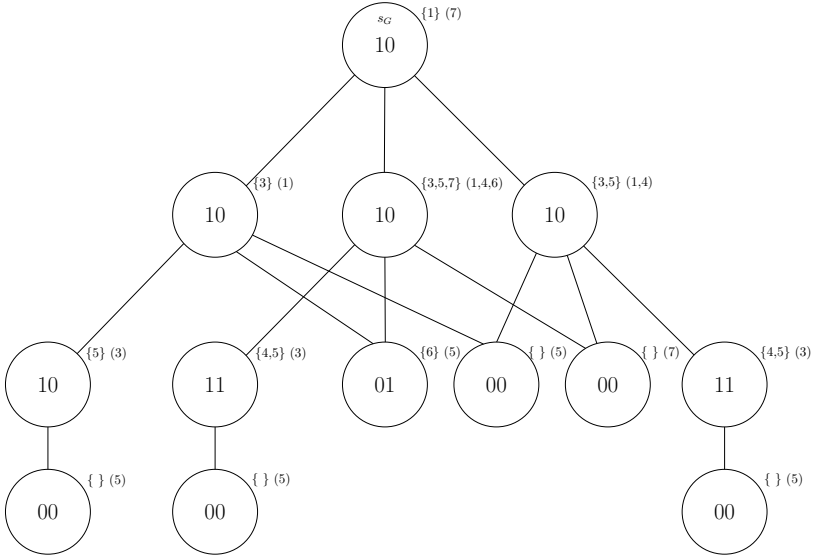}
	\end{center}
	
	\caption{\small Example of an execution of Algorithm $\cB$ on a graph labeled by $\lambda$. Each node contains its 2-bit label. To the upper right of each node: numbers in curly brackets are the round numbers in which the node transmits, numbers in parentheses are round numbers in which the node receives a message. Messages sent or received in odd rounds contain the source message $\mu$, and messages sent or received in even rounds contain ``stay".}
	\label{example}
\end{figure}

\begin{algorithm}[H]
	\small
	\caption{$\cB(\mu)$ executed at each node $v$}
	\label{bcastpseudo}
	\begin{algorithmic}[1]
		\Statex \% Each node has a variable \texttt{sourcemsg}. The source node has this variable initially set to $\mu$, all other nodes have it initially set to $null$.
		\For{each round $r$}
			\If{(never sent or received a message) {\bf and} (\texttt{sourcemsg} $\neq null$)}
				\Statex \algindent\algindent\% the source node transmits $\mu$ in first round
				\State transmit $\texttt{sourcemsg}$
			\ElsIf{(\texttt{sourcemsg} = $null$)}
				\Statex \algindent\algindent\% $v$ has not previously received $\mu$, listen for transmission
				\If{(message $m$ is received) {\bf and} $(m \neq \textrm{``stay"})$}
					\State  $\texttt{sourcemsg} \leftarrow m$
				\EndIf
			\Else 
				\Statex \algindent\algindent\% $v$ received $\mu$ before round $r$
				\If{$v$ first received \texttt{sourcemsg} in round $r-2$} \label{receivedtwoago}
					\If{$x_1 = 1$} 
						\State transmit $\texttt{sourcemsg}$\label{x1set}
					\EndIf
				\ElsIf{$v$ first received \texttt{sourcemsg} in round $r-1$} \label{receivedinlast}
					\If{$x_2 = 1$} 
						\State transmit $\textrm{``stay"}$ \label{x2set}
					\EndIf
				\ElsIf{$v$ transmitted $\texttt{sourcemsg}$ in round $r-2$ {\bf and} received $\textrm{``stay"}$ in round $r-1$} \label{receivedecho}
					\State transmit $\texttt{sourcemsg}$ \label{repeatsmsg}
				\EndIf
			\EndIf
		\EndFor
	\end{algorithmic}
\end{algorithm}

We now formally define the labeling scheme and prove the correctness of $\cB$. We rely heavily on five carefully chosen sequences of node sets. The following notation will be used in the construction of these sequences and throughout the remainder of this section. 

A set of nodes $X$ \emph{dominates} a set of nodes $Y$ if, for each node $y \in Y$, there is a node $x \in X$ that is adjacent to $y$.
For any set of nodes $X \subseteq V(G)$, denote by $\Gamma(X)$ the neighbourhood of $X$, i.e., $\Gamma(X) = \{v \in V(G)\ |\ \exists w \in X, \{v,w\} \in E(G) \}$.

\subsection{Sequence Constructions and Properties}

We construct five sequences of sets, indexed by $i \geq 1$. At a high level, $\inform{i}$ will be the nodes that are informed before round $2i-1$, $\uninf{i}$ will be the nodes that are not informed before round $2i-1$, $\frontier{i}$ will be the uninformed nodes that are adjacent to at least one informed node in round $2i-1$, $\new{i}$ will be the nodes that are newly-informed in round $2i-1$, and $\dom{i}$ will be the nodes that inform the nodes in $\new{i}$ in round $2i-1$. Recalling that $s_G$ denotes the source node of $G$, we initialize the construction by setting $\inform{1} = \{s_G\}, \uninf{1} = V(G) - \{s_G\}, \frontier{1} = \Gamma(s_G), \new{1} = \Gamma(s_G), \dom{1} = \{s_G\}$. Our construction proceeds in stages, where stage $i \geq 2$ is as follows:
\begin{enumerate}
	\item Define $\inform{i} = \inform{i-1} \cup \new{i-1}$.
	\item Define $\uninf{i} = \uninf{i-1} \setminus \new{i-1}$.
	\item Define $\frontier{i} = \uninf{i} \cap \Gamma(\inform{i})$.
	\item Define $\dom{i}$ to be a minimal subset of $\dom{i-1} \cup \new{i-1}$ that dominates all nodes in $\frontier{i}$.
	\item Define $\new{i}$ to be the subset of nodes in $\frontier{i}$ that are adjacent to exactly one node in $\dom{i}$.
\end{enumerate}

The construction ends when $\inform{i} = V(G)$. We now provide some useful facts about the sequences. The first two observations are direct consequences of the construction.

\begin{fact}\label{NewContainment}
	$\new{i} \subseteq \frontier{i} \subseteq \uninf{i}$ for all $i \geq 1$.
\end{fact}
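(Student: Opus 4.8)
The plan is to verify the two inclusions separately, and in each case to split into the base case $i = 1$ and the general stage $i \geq 2$; both inclusions fall out of the definitions almost immediately, so no genuine induction is required.

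For the inclusion $\new{i} \subseteq \frontier{i}$, the case $i \geq 2$ is settled directly by step~5 of the construction, which defines $\new{i}$ to be the subset of the nodes in $\frontier{i}$ that are adjacent to exactly one node of $\dom{i}$; being a subset of $\frontier{i}$ by definition, the containment is immediate. For $i = 1$, the initialization gives $\new{1} = \Gamma(s_G) = \frontier{1}$, so the inclusion holds with equality.

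For the inclusion $\frontier{i} \subseteq \uninf{i}$, the case $i \geq 2$ is again immediate: step~3 defines $\frontier{i} = \uninf{i} \cap \Gamma(\inform{i})$, which is contained in $\uninf{i}$ as an intersection with that set. The only point needing a word of justification is the base case $i = 1$, where $\frontier{1} = \Gamma(s_G)$ and $\uninf{1} = V(G) \setminus \{s_G\}$. Here I would invoke that $G$ is a simple graph, so $s_G$ has no self-loop and hence $s_G \notin \Gamma(s_G)$; this yields $\Gamma(s_G) \subseteq V(G) \setminus \{s_G\}$, as needed.

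There is no real obstacle: as the text observes, the statement is a direct consequence of the construction, and essentially every containment is syntactic. The one place to stay alert is the $i = 1$ case of the second inclusion, where the containment is not purely formal but relies on the simplicity of $G$ (the absence of self-loops).
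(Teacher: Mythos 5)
Your proof is correct and takes essentially the same approach as the paper, which states this fact without proof as a ``direct consequence of the construction'': the inclusions for $i \geq 2$ are definitional (steps 3 and 5 of the construction), and the $i=1$ case is immediate from the initialization. Your extra observation that $\frontier{1} \subseteq \uninf{1}$ relies on $G$ being simple (so $s_G \notin \Gamma(s_G)$) is a valid and slightly more careful treatment of the base case than the paper makes explicit.
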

\begin{fact}\label{unions}
	$\inform{i} = \inform{1} \cup \bigcup_{j=1}^{i-1} \new{j}$ and $\uninf{i} = \uninf{1} \setminus \bigcup_{j=1}^{i-1} \new{j}$.
\end{fact}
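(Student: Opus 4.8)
The plan is to prove both identities simultaneously by induction on $i$, since each equation at index $i$ follows directly from the corresponding recursive definition in Steps 1 and 2 of the construction together with the inductive hypothesis at index $i-1$.

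For the base case $i = 1$, both claims reduce to $\inform{1} = \inform{1} \cup \emptyset$ and $\uninf{1} = \uninf{1} \setminus \emptyset$, using the convention that the empty union $\bigcup_{j=1}^{0} \new{j}$ equals $\emptyset$; these hold trivially.

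For the inductive step, I would fix $i \geq 2$ and assume the two identities at index $i-1$. To handle the first, I would begin from the defining equation $\inform{i} = \inform{i-1} \cup \new{i-1}$ of Step 1, substitute the inductive hypothesis $\inform{i-1} = \inform{1} \cup \bigcup_{j=1}^{i-2} \new{j}$, and absorb the extra term $\new{i-1}$ into the union to get $\inform{i} = \inform{1} \cup \bigcup_{j=1}^{i-1} \new{j}$. For the second, I would start from $\uninf{i} = \uninf{i-1} \setminus \new{i-1}$ of Step 2, substitute $\uninf{i-1} = \uninf{1} \setminus \bigcup_{j=1}^{i-2} \new{j}$, and apply the elementary identity $(A \setminus B) \setminus C = A \setminus (B \cup C)$ to merge the two removed sets, yielding $\uninf{i} = \uninf{1} \setminus \bigcup_{j=1}^{i-1} \new{j}$.

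This is a routine unrolling of the recursion, so I do not expect any real obstacle; the only points requiring mild care are the empty-union convention in the base case and the correct handling of the repeated set difference in the inductive step, both of which are standard.
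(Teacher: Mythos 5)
Your proof is correct and matches the paper's intent: the paper states this fact without proof, remarking only that it is a ``direct consequence of the construction,'' and your induction simply unrolls the defining recursions $\inform{i} = \inform{i-1} \cup \new{i-1}$ and $\uninf{i} = \uninf{i-1} \setminus \new{i-1}$, which is exactly the argument being alluded to. Your attention to the empty-union convention and the identity $(A \setminus B) \setminus C = A \setminus (B \cup C)$ is sound, so there is nothing to fix.
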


\begin{lemma}\label{NewDisjoint}
	For $i \neq i'$, we have $\new{i} \cap \new{i'} = \emptyset$.
\end{lemma}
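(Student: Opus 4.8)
The plan is to derive the disjointness immediately from the two Facts already established, without needing a separate induction. The key observation is that Fact~\ref{unions} tells us exactly which nodes have been removed from the universe to form $\uninf{i'}$, namely all the previously newly-informed sets $\new{1}, \ldots, \new{i'-1}$, while Fact~\ref{NewContainment} guarantees that $\new{i'}$ still lives inside $\uninf{i'}$. Combining these two facts pins $\new{i'}$ into a region of $V(G)$ that provably excludes every earlier $\new{i}$.

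Concretely, I would argue as follows. By symmetry it suffices to treat the case $i < i'$, so assume this without loss of generality. Since $i < i'$, we have $i \leq i'-1$, and therefore $\new{i}$ is one of the sets appearing in the union $\bigcup_{j=1}^{i'-1} \new{j}$. Fact~\ref{unions} gives $\uninf{i'} = \uninf{1} \setminus \bigcup_{j=1}^{i'-1} \new{j}$, and since this set difference removes all of $\new{i}$, we obtain $\uninf{i'} \cap \new{i} = \emptyset$. On the other hand, Fact~\ref{NewContainment} yields $\new{i'} \subseteq \uninf{i'}$. Chaining these together gives $\new{i'} \cap \new{i} \subseteq \uninf{i'} \cap \new{i} = \emptyset$, which is the desired conclusion.

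I do not expect a genuine obstacle here, since the statement is essentially a bookkeeping consequence of how $\uninf{i'}$ is defined to strip away all previously-informed nodes. The only point requiring a little care is the indexing: one must verify that the strict inequality $i < i'$ indeed places $\new{i}$ among the sets subtracted in the formula for $\uninf{i'}$ (this fails if one mistakenly allowed $i = i'$, which is excluded by hypothesis). Handling the two orderings of $i$ and $i'$ by a single ``without loss of generality'' reduction keeps the argument clean.
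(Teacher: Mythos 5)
Your proposal is correct and follows essentially the same argument as the paper: both apply Fact~\ref{NewContainment} and Fact~\ref{unions} to place the later NEW set inside $\uninf{1}$ minus the union of all earlier NEW sets, which immediately forces disjointness. The only cosmetic difference is that you assume $i < i'$ while the paper assumes $i > i'$, a pure relabeling.
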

\begin{proof}
	Without loss of generality, assume that $i > i'$. By Facts \ref{NewContainment} and \ref{unions}, it follows that $\new{i} \subseteq \uninf{i} = \uninf{1} \setminus \bigcup_{j=1}^{i-1} \new{j}$. In particular, $\new{i} \subseteq \uninf{1} \setminus \new{i'}$, so $\new{i} \cap \new{i'} = \emptyset$.
\end{proof}


The following result can be viewed as a guarantee of progress in each stage: if there are any remaining uninformed nodes at stage $i$, then at least one node will be newly informed in stage $i$.

\begin{lemma}\label{NewNonEmpty}
	For each $i \geq 1$, if $\inform{i} \neq V(G)$, then $\new{i} \neq \emptyset$.
\end{lemma}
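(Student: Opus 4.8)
The plan is to derive $\new{i} \neq \emptyset$ from the minimality of $\dom{i}$: minimality forces at least one frontier node to have a \emph{unique} dominator, and by construction every such node lies in $\new{i}$. The argument splits naturally into first showing the frontier is nonempty (so that $\dom{i}$ is nonempty) and then a private-neighbour extraction.

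First I would establish that $\frontier{i} \neq \emptyset$. By Fact \ref{unions} together with $\new{j} \subseteq \uninf{j} \subseteq \uninf{1}$ from Fact \ref{NewContainment}, the sets $\inform{i}$ and $\uninf{i}$ partition $V(G)$; hence the hypothesis $\inform{i} \neq V(G)$ gives $\uninf{i} \neq \emptyset$. Since $G$ is connected with $s_G \in \inform{i}$ and $\uninf{i} \neq \emptyset$, there must be an edge with one endpoint in $\inform{i}$ and the other in $\uninf{i}$ (otherwise $G$ would be disconnected). Its uninformed endpoint lies in $\uninf{i} \cap \Gamma(\inform{i}) = \frontier{i}$, so $\frontier{i} \neq \emptyset$. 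Because $\dom{i}$ dominates the nonempty set $\frontier{i}$, and the empty set dominates nothing nonempty, $\dom{i}$ itself is nonempty.

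The heart of the proof is the private-neighbour step, which I would run for $i \geq 2$. Pick any $x \in \dom{i}$. Since $\dom{i}$ is a \emph{minimal} subset of $\dom{i-1} \cup \new{i-1}$ dominating $\frontier{i}$, the proper subset $\dom{i} \setminus \{x\}$ fails to dominate $\frontier{i}$, so some $y \in \frontier{i}$ has no neighbour in $\dom{i} \setminus \{x\}$. Yet $y$ is dominated by $\dom{i}$, so $y$ must be adjacent to $x$ and to no other node of $\dom{i}$; that is, $y$ is adjacent to exactly one node of $\dom{i}$. By the definition of $\new{i}$ (step 5 of the construction), $y \in \new{i}$, whence $\new{i} \neq \emptyset$. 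The case $i = 1$ is degenerate and immediate: there $\new{1} = \frontier{1} = \Gamma(s_G)$, which the connectivity step above already shows to be nonempty.

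The main point requiring care is the well-definedness of $\dom{i}$, i.e., that $\dom{i-1} \cup \new{i-1}$ actually dominates $\frontier{i}$ so that a minimal dominating subset exists at all; I would take this as guaranteed by the construction (or by the accompanying invariant that the previous stage's dominators and newly-informed nodes cover the next frontier). Granting that, neither step is genuinely difficult, and the only delicate bookkeeping is converting ``$y$ is not dominated by $\dom{i} \setminus \{x\}$'' into ``$y$ is adjacent to exactly one node of $\dom{i}$'', which is precisely the membership condition defining $\new{i}$.
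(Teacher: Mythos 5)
Your proposal is correct and follows essentially the same route as the paper's proof: nonemptiness of $\frontier{i}$ via connectivity and the partition of $V(G)$ into $\inform{i}$ and $\uninf{i}$, followed by the private-neighbour argument that minimality of $\dom{i}$ forces some frontier node to be adjacent to exactly one dominator, hence to lie in $\new{i}$. The only differences are cosmetic (you phrase the minimality step directly rather than by contradiction, and you make explicit the nonemptiness of $\dom{i}$ and the reliance on its well-definedness, which the paper handles in a separate lemma).
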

\begin{proof}
	If $\inform{1} = \{s_G\} \neq V(G)$, then $\new{1} = \Gamma(s_G) \neq \emptyset$. So assume that $i \geq 2$. Since the graph is connected and $V(G)$ is the disjoint union of $\inform{i}$ and $\uninf{i}$, it follows that $\frontier{i} \neq \emptyset$.
	
	Consider any $v \in \dom{i}$. If each node $w \in \frontier{i}$ that is adjacent to $v$ is also adjacent to at least one other node in $\dom{i}$, then $\dom{i} \setminus \{v\}$ also dominates all nodes in $\frontier{i}$, contradicting the minimality of $\dom{i}$. Thus, there is at least one node $w \in \frontier{i}$ that is adjacent to $v$ and not adjacent to any other node in $\dom{i}$. Hence, by definition, $\new{i} \neq \emptyset$.
\end{proof}

The following result shows that the $\dom{i}$ is well-defined.
\begin{lemma}\label{DomExists}
	For all $i \geq 2$, there exists a subset of $\dom{i-1} \cup \new{i-1}$ that dominates all nodes in $\frontier{i}$.
\end{lemma}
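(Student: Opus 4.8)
The plan is to prove the slightly stronger statement that the \emph{whole} set $\dom{i-1} \cup \new{i-1}$ dominates $\frontier{i}$; this trivially yields the existence of a (minimal) dominating subset asserted by the lemma. I would run an induction on $i$, carrying along as the induction hypothesis the assertion that $\dom{i-1}$ dominates $\frontier{i-1}$. This hypothesis holds at the bottom, $i-1 = 1$, because $\dom{1} = \{s_G\}$ and $\frontier{1} = \Gamma(s_G)$, so every frontier node is by definition adjacent to $s_G$. For $i-1 \geq 2$ it holds because $\dom{i-1}$ is chosen to be a minimal subset of $\dom{i-2} \cup \new{i-2}$ dominating $\frontier{i-1}$ — a choice that is legitimate precisely because the statement of this lemma at the smaller index $i-1$, already established in the induction, guarantees that such a subset exists.

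The core step is to fix an arbitrary $w \in \frontier{i}$ and exhibit a neighbour of $w$ in $\dom{i-1} \cup \new{i-1}$. Since $\frontier{i} = \uninf{i} \cap \Gamma(\inform{i})$ and $\inform{i} = \inform{i-1} \cup \new{i-1}$, the node $w$ must be adjacent to some node $u \in \inform{i-1} \cup \new{i-1}$. I would split on where $u$ lies. If $u \in \new{i-1}$, then $u \in \dom{i-1} \cup \new{i-1}$ and $w$ is dominated immediately.

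The only case requiring thought is $u \in \inform{i-1}$. Here I would show that $w$ in fact belongs to $\frontier{i-1}$: because $\uninf{i} = \uninf{i-1} \setminus \new{i-1} \subseteq \uninf{i-1}$ we have $w \in \uninf{i-1}$, and because $w$ is adjacent to $u \in \inform{i-1}$ we have $w \in \Gamma(\inform{i-1})$, so $w \in \uninf{i-1} \cap \Gamma(\inform{i-1}) = \frontier{i-1}$. The induction hypothesis then says $\dom{i-1}$ dominates $\frontier{i-1}$, so $w$ has a neighbour in $\dom{i-1} \subseteq \dom{i-1} \cup \new{i-1}$. Combining the two cases shows every frontier node of stage $i$ is dominated, which is what we want.

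The part needing care — rather than genuine difficulty — is keeping the induction self-consistent: the phrase ``$\dom{i-1}$ is a minimal subset dominating $\frontier{i-1}$'' only makes sense once such a subset is known to exist, so the claim ``$\dom{i-1}$ dominates $\frontier{i-1}$'' must be threaded through the induction jointly with the lemma's existence statement, instead of being invoked as if $\dom{i-1}$ were well-defined a priori. Once that interleaving is set up correctly, the remainder is a routine chase through the containments in Facts~\ref{NewContainment} and~\ref{unions} and the adjacency relations.
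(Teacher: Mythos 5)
Your proof is correct and is essentially the paper's argument: both hinge on the fact that $\dom{i-1}$ dominates $\frontier{i-1}$, and both show that a node of $\frontier{i}$ not covered this way must have a neighbour in $\new{i-1}$ (the paper argues contrapositively, you split on where the neighbour in $\inform{i}$ lies, which is the same case analysis). Your explicit threading of the induction to justify that $\dom{i-1}$ is well-defined is a sound formalization of what the paper's staged construction leaves implicit.
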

\begin{proof}
	Consider any node $v \in \frontier{i}$ and suppose that $v$ does not have a neighbour in $\dom{i-1}$. By definition, $\dom{i-1}$ dominates all nodes in $\frontier{i-1}$, so it follows that $v \notin \frontier{i-1}$. By Fact \ref{NewContainment}, $v \in \frontier{i} \subseteq \uninf{i}$, and by construction, $\uninf{i} \subseteq \uninf{i-1}$, so $v \in \uninf{i-1}$. By the definition of $\frontier{i-1}$, it follows that $v \notin \Gamma(\inform{i-1})$. But, $v \in \frontier{i}$ implies that $v \in \Gamma(\inform{i})$. It follows that $v$ has a neighbour in $\inform{i} \setminus \inform{i-1} = \new{i-1}$. Therefore, we have shown that every node $v \in \frontier{i}$ has at least one neighbour in $\dom{i-1} \cup \new{i-1}$, which implies the desired result.
\end{proof}

Let $\laststage$ be the smallest value of $i$ such that $\inform{i} = V(G)$. We now give an upper bound on the value of $\ell$.

\begin{lemma}\label{BoundLastStage}
	$\ell \leq n$.
\end{lemma}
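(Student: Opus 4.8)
The plan is to show that the informed set strictly grows in each stage before termination, and then simply count. The key observation I would establish first is that $\inform{i}$ and $\uninf{i}$ form a partition of $V(G)$ for every $i$: this holds at $i=1$ by definition, and the inductive step follows directly from the construction, since $\new{i-1} \subseteq \uninf{i-1}$ (Fact \ref{NewContainment}) guarantees that moving $\new{i-1}$ out of $\uninf{i-1}$ and into $\inform{i-1}$ keeps the two sets disjoint with union $V(G)$. Equivalently, one can read this partition off Fact \ref{unions} together with Lemma \ref{NewDisjoint}, using $\inform{1} = \{s_G\}$ and $\uninf{1} = V(G) \setminus \{s_G\}$.

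Given this partition, the main step is that $|\inform{i+1}| \geq |\inform{i}| + 1$ whenever $\inform{i} \neq V(G)$. Indeed, $\inform{i+1} = \inform{i} \cup \new{i}$ by construction, and $\new{i} \subseteq \uninf{i}$ is therefore disjoint from $\inform{i}$; moreover, Lemma \ref{NewNonEmpty} guarantees $\new{i} \neq \emptyset$ as long as $\inform{i} \neq V(G)$. Hence every stage before termination adds at least one genuinely new node to the informed set.

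From here I would finish with a one-line induction. Starting from $|\inform{1}| = 1$, the strict-growth bound yields $|\inform{i}| \geq i$ for every $i \leq \ell$: the base case is immediate, and if $|\inform{i}| \geq i$ with $i < \ell$, then $\inform{i} \neq V(G)$ forces $|\inform{i+1}| \geq |\inform{i}| + 1 \geq i+1$. Taking $i = \ell$ and using $\inform{\ell} = V(G)$ gives $n = |\inform{\ell}| \geq \ell$, that is, $\ell \leq n$.

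I do not expect any real obstacle here, since this is essentially a monovariant/counting argument and the substantive work has already been done in Lemma \ref{NewNonEmpty}. The only point requiring a moment's care is verifying the partition property, which is what ensures that the nodes of $\new{i}$ are genuinely new rather than already counted in $\inform{i}$; once that is in place, the strict-growth bound and the final count are immediate.
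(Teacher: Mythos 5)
Your proposal is correct and follows essentially the same approach as the paper: an induction showing $|\inform{i}| \geq i$, with Lemma \ref{NewNonEmpty} supplying the progress guarantee and the disjointness of $\new{i}$ from $\inform{i}$ (which the paper gets from Fact \ref{unions}) ensuring genuine growth. Your write-up is simply a more explicit version of the paper's terse argument.
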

\begin{proof}
	The proof is by induction on $i$. By definition, $|\inform{1}|=1$, and, by Fact \ref{unions} and Lemma \ref{NewNonEmpty}, it follows that $|\inform{i}| \geq i$. Hence, $\inform{n} = V(G)$.
\end{proof}


It follows from Lemmas \ref{NewDisjoint} and \ref{BoundLastStage} that every node in $G \setminus \{s_G\}$ is contained in exactly one of the $\new{i}$ sets. We will later use this to ensure that all nodes in $G \setminus \{s_G\}$ are eventually informed.
\begin{corollary}\label{NewPartition}
	The sets $\new{1},\ldots,\new{\ell-1}$ form a partition of $G \setminus \{s_G\}$.
\end{corollary}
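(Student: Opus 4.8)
The plan is to show that $\new{1},\ldots,\new{\ell-1}$ are pairwise disjoint and that their union is exactly $V(G)\setminus\{s_G\}$, which together establish that they form a partition. Pairwise disjointness is already available for free: by Lemma \ref{NewDisjoint}, $\new{i}\cap\new{i'}=\emptyset$ whenever $i\neq i'$, so this half requires no further work. The remaining task is to identify the union $\bigcup_{j=1}^{\ell-1}\new{j}$ with $V(G)\setminus\{s_G\}$.

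For the union, I would work directly from Fact \ref{unions}, which states that $\inform{i}=\inform{1}\cup\bigcup_{j=1}^{i-1}\new{j}$. Instantiating this at $i=\ell$ and recalling that $\inform{1}=\{s_G\}$ gives $\inform{\ell}=\{s_G\}\cup\bigcup_{j=1}^{\ell-1}\new{j}$. By the definition of $\ell$ as the smallest index with $\inform{\ell}=V(G)$, the left-hand side is all of $V(G)$, so $V(G)=\{s_G\}\cup\bigcup_{j=1}^{\ell-1}\new{j}$. To conclude that removing $s_G$ yields precisely the union, it remains only to observe that $s_G$ itself lies in none of the sets $\new{j}$ for $j\ge 1$; indeed, by Fact \ref{NewContainment}, $\new{j}\subseteq\uninf{j}$, and $\uninf{j}\subseteq\uninf{1}=V(G)\setminus\{s_G\}$ by Fact \ref{unions}, so no $\new{j}$ contains the source. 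Hence $\bigcup_{j=1}^{\ell-1}\new{j}=V(G)\setminus\{s_G\}$.

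Putting the two halves together, the sets $\new{1},\ldots,\new{\ell-1}$ are pairwise disjoint by Lemma \ref{NewDisjoint} and cover $V(G)\setminus\{s_G\}$ exactly, so they constitute a partition of $V(G)\setminus\{s_G\}$. I do not anticipate a genuine obstacle here, since the corollary is essentially a bookkeeping consequence of the already-established facts and lemmas; the only point requiring a moment's care is verifying that $s_G\notin\new{j}$ for every $j$, so that the source is correctly excluded from the union rather than silently dropped. Everything else follows immediately by substituting $i=\ell$ into Fact \ref{unions} and invoking the minimality in the definition of $\ell$.
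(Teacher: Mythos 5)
Your proposal is correct and follows essentially the same route as the paper, which states the corollary as an immediate consequence of Lemma \ref{NewDisjoint} (disjointness) together with the termination of the construction at stage $\ell$; your instantiation of Fact \ref{unions} at $i=\ell$ and the check that $s_G \notin \new{j}$ simply make explicit the coverage argument the paper leaves implicit.
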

%
%
%
%

\subsection{The Labeling Scheme $\lambda$}\label{labeling}


Formally, our labeling scheme $\lambda(G)$ assigns a label $x_1x_2$ to each node in $G$ as follows:

\begin{itemize}
	\item
	For each node $v$, if there exists $i \geq 1$ such that $v \in \dom{i}$, then set $x_1 = 1$ at node $v$. Otherwise, set $x_1 = 0$ at node $v$.
	\item
	For each $i \geq 1$, for each node $v \in \dom{i+1} \cap \dom{i}$, arbitrarily pick one node $w \in \new{i}$ that is adjacent to $v$, and set $x_2 = 1$ at node $w$. At all other nodes, set $x_2 = 0$.
\end{itemize}

\subsection{Correctness of algorithm $\cB$}\label{analysis}
Our approach to showing that all nodes are eventually informed is to fully characterize which nodes transmit and which nodes are newly-informed in each round of the broadcast algorithm. Roughly speaking, we will show that, in an odd round $2i-1$, the nodes in $\dom{i}$ transmit and all nodes in $\new{i}$ receive the source message for the first time. Then, in round $2i$, a certain subset of $\new{i}$ transmits, which results in the nodes of $\dom{i+1}$ receiving ``stay". This will prompt the nodes of $\dom{i+1}$ to transmit in round $2i+1$, which informs all nodes in $\new{i+1}$. In this way, we will show that, for all $i \in \{1,\ldots,\ell-1\}$, all nodes in $\new{i}$ are informed in round $2i-1$. Since we have already shown that the sets $\new{1},\ldots,\new{\ell-1}$ partition $G \setminus \{s_G\}$, this will show that broadcast is completed.


\begin{lemma}\label{NewInformed}
	For each $t > 0$,
	\begin{enumerate}
		\item If $t=2i-1$, the following hold: 
		\begin{enumerate}
			\item
			Node $v$ transmits $\mu$ in round $t$ if and only if $v \in \dom{i}$. 
			\item
			Node $w$ receives $\mu$ for the first time in round $t$ if and only if $w \in \new{i}$.
		\end{enumerate}
		\item If $t=2i$, the following holds:
		\begin{enumerate}
			\item
			Node $v$ transmits ``stay" in round $t$ if and only if $v \in \new{i}$ and $v$'s label has $x_2 = 1$.
		\end{enumerate}
	\end{enumerate}
\end{lemma}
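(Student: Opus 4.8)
The plan is to prove Lemma~\ref{NewInformed} by strong induction on $t$, establishing all three claims simultaneously. The statement tightly couples odd and even rounds: the behaviour in round $2i-1$ determines which nodes learn $\mu$, and this feeds into round $2i$, whose ``stay'' transmissions in turn determine who transmits $\mu$ in round $2i+1$. So I would set up the induction to carry all the round-$t'$ facts for $t' < t$ and prove the round-$t$ fact, being careful that each claim uses only strictly earlier rounds.

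First I would handle the base cases. For $t=1$ (i.e.\ $i=1$): by the algorithm, only the source transmits in round~$1$ (it is the unique node with \texttt{sourcemsg}$\neq null$ that has never sent or received), and $\dom{1} = \{s_G\}$, giving claim~1(a). The nodes hearing $\mu$ for the first time are exactly the neighbours of $s_G$ that avoid a collision; since $s_G$ is the only transmitter, every node in $\Gamma(s_G) = \frontier{1}$ hears it, and $\new{1} = \Gamma(s_G)$, giving claim~1(b). For $t=2$ I would verify claim~2(a) directly from the pseudocode: a node transmits ``stay'' in round $2$ exactly when it first received \texttt{sourcemsg} in round~$1$ (line~\ref{receivedinlast}) and has $x_2=1$; by the $t=1$ analysis these are precisely the nodes of $\new{1}$ with $x_2=1$.

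For the inductive step I would split into the odd and even cases. In the \emph{even} case $t=2i$, claim~2(a) follows almost immediately: by line~\ref{receivedinlast}, $v$ transmits ``stay'' iff it first received $\mu$ in round $t-1 = 2i-1$ and has $x_2=1$, and by the induction hypothesis applied to round $2i-1$ (claim~1(b)) the former is equivalent to $v \in \new{i}$. The \emph{odd} case $t = 2i-1$ with $i \geq 2$ is the heart of the proof. For claim~1(a), I would examine the three \texttt{ElsIf} branches that can cause a transmission of $\mu$ in an odd round: line~\ref{x1set} (first received $\mu$ two rounds ago, i.e.\ in round $2i-3=2(i-1)-1$, and $x_1=1$) and line~\ref{repeatsmsg} (transmitted $\mu$ in round $2i-3$ and received ``stay'' in round $2i-2$). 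By the induction hypothesis, the nodes satisfying the first condition are those in $\new{i-1}$ with $x_1=1$, and the nodes satisfying the second are those in $\dom{i-1}$ that received ``stay'' in round $2i-2$. I would then match this against the labeling scheme $\lambda$ and the definition $\dom{i} \subseteq \dom{i-1}\cup\new{i-1}$: a node of $\new{i-1}$ has $x_1=1$ exactly when it lies in some $\dom{j}$, and the labeling is designed so that a node of $\dom{i-1}\cap\dom{i}$ receives a ``stay'' (from the chosen adjacent node in $\new{i-1}$ whose $x_2$ was set to $1$) precisely to keep it transmitting. The goal is to show these two sets of transmitters union exactly to $\dom{i}$. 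Finally, for claim~1(b), once claim~1(a) pins down the transmitter set as $\dom{i}$, a node hears $\mu$ for the first time in round $2i-1$ iff it is uninformed and adjacent to exactly one node of $\dom{i}$, which is the definition of $\new{i}$; here I would also invoke Corollary~\ref{NewPartition} to confirm that these nodes were not already informed in an earlier round.

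The main obstacle I anticipate is claim~1(a) in the odd step, specifically verifying that the two transmission sources combine to give \emph{exactly} $\dom{i}$ with no omissions or duplicates. This requires a careful correspondence with the labeling scheme: I must check that every node of $\dom{i}$ indeed transmits (those inherited from $\dom{i-1}$ get their ``stay'' echo because $\lambda$ assigned $x_2=1$ to a suitable neighbour in $\new{i-1}$, while those freshly promoted from $\new{i-1}$ have $x_1=1$), and conversely that no spurious node transmits. The delicate point is the interplay between $\dom{i-1}\cap\dom{i}$ (needing the ``stay'' mechanism) and $\dom{i}\setminus\dom{i-1} \subseteq \new{i-1}$ (needing $x_1=1$), together with confirming that a node in $\new{i-1}$ chosen to send ``stay'' does not accidentally cause a collision that would disrupt the reception of ``stay'' at the intended $\dom{i}$ node. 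Handling these collision arguments cleanly, and ensuring the ``exactly one neighbour'' condition in claim~1(b) is respected, is where the bulk of the work lies.
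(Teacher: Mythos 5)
Your overall structure---strong induction on $t$, base case $t=1$, the even case read off from line~\ref{receivedinlast} together with induction hypothesis 1(b), and the odd case analyzed through the two transmission branches at lines~\ref{x1set} and~\ref{repeatsmsg}---is exactly the paper's approach, and your base case, even case, and derivation of 1(b) from 1(a) are sound. However, for claim 1(a) in the odd case, which you yourself identify as the heart of the matter, you only state the goal (``show these two sets of transmitters union exactly to $\dom{i}$'') and flag the obstacles without resolving them. The paper's proof consists precisely of the three arguments you leave open, and none of them is routine, so this is a genuine gap rather than an omitted routine verification.

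Concretely, the missing arguments are: (i)~for a node $v \in \new{i-1}$ with $x_1=1$, you must show $v \in \dom{i}$ itself, not merely $v \in \dom{j}$ for some $j$; the paper does this by taking $j$ minimal with $v \in \dom{j}$, deducing $v \in \new{j-1}$ from $\dom{j} \subseteq \dom{j-1} \cup \new{j-1}$, and invoking Lemma~\ref{NewDisjoint} to force $j = i$---a mechanism (minimality plus disjointness of the $\new{}$ sets) that your proposal never mentions. (ii)~For the forward direction on $\dom{i} \cap \dom{i-1}$, you must show such a node actually \emph{hears} ``stay'' in round $2i-2$, i.e., that exactly one of its neighbours in $\new{i-1}$ has $x_2 = 1$; this follows from the defining property of $\new{i-1}$ that each of its members is adjacent to exactly one node of $\dom{i-1}$, so a node picked for a different member of $\dom{i} \cap \dom{i-1}$ cannot also be adjacent to $v$. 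Your worry about ``stay'' collisions is resolved by exactly this observation, which you do not make. (iii)~For the converse of the echo branch, a node $v$ that transmitted $\mu$ in round $2i-3$ and received ``stay'' from some $w \in \new{i-1}$ must be shown equal to the node $v' \in \dom{i} \cap \dom{i-1}$ for which $w$ was picked; the paper gets $v = v'$ because $w \in \new{i-1}$ received $\mu$ in round $2i-3$, hence has exactly one neighbour transmitting in that round, and both $v$ and $v'$ are such neighbours. Without (i)--(iii), the ``no spurious transmitter'' half of 1(a) is unproved, and since 1(b) rests on 1(a), the whole inductive step remains open.
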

\begin{proof}
	The proof proceeds by induction on $t$. In the base case, $t=1$, we see that the source $s_G$ is the only node that transmits in round 1, it transmits $\mu$, and the set of nodes that receive $\mu$ for the first time in round 1 is $\Gamma(s_G)$. Since $\dom{1} = \{s_G\}$ and $\new{1} = \Gamma(s_G)$, this proves the base case.
	
	For a fixed $t \geq 2$, assume that the result holds for all rounds $t' < t$. The induction step has two cases:
	\begin{itemize}
		\item  $t = 2i$ for some $i \geq 1$.
		
		First, consider a node $v \in \new{i}$ such that $v$'s label has $x_2 = 1$. By the induction hypothesis, $v$ receives $\mu$ for the first time in round $2i-1$. By the definition of the broadcast algorithm, $v$ transmits ``stay" in round $2i$ at line \ref{x2set}.
		
		Conversely, suppose that $v$ transmits ``stay" in round $2i$. By the algorithm, $v$ must have transmitted ``stay" at line \ref{x2set}. From the code, it follows that $v$'s label has $x_2 = 1$ and $v$ received $\mu$ for the first time in round $2i-1$. By the induction hypothesis, $v \in \new{i}$. This completes the proof of 2(a).
		
		\item  $t = 2i-1$ for some $i \geq 2$.
		\begin{itemize}
			\item Proof of 1(a):
			
			First, suppose that $v \in \dom{i}$. By the definition of $\dom{i}$, we know that $\dom{i} \subseteq \dom{i-1} \cup \new{i-1}$. If $v \in \new{i-1}$, then, by the induction hypothesis, $v$ received $\mu$ for the first time in round $2i-3$. By the definition of the labeling scheme, we know that $v$'s label has $x_1 = 1$. Hence, by lines \ref{receivedtwoago}-\ref{x1set}, $v$ transmits $\mu$ in round $2i-1$. So, suppose $v \in \dom{i-1}$. By the induction hypothesis, $v$ transmitted $\mu$ in round $2i-3$. By the definition of the labeling scheme, there is exactly one node in $\new{i-1}$ that is adjacent to $v$ and is labeled with $x_2 = 1$. Therefore, exactly one neighbour of $v$ transmits ``stay" in round $2i-2$, so $v$ receives ``stay" in round $2i-2$. Hence, from lines \ref{receivedecho}-\ref{repeatsmsg}, $v$ transmits in round $2i-1$. 
			
			Conversely, suppose that $v$ transmits $\mu$ in round $2i-1$. There are two cases to consider, depending on whether $v$'s transmission of $\mu$ in round $2i-1$ occurred at line \ref{x1set} or \ref{repeatsmsg}. In the first case, by line \ref{x1set}, we know that $v$'s label has $x_1=1$, so, by the definition of labeling scheme, $v \in \dom{j}$ for some minimal $j$. Since $\dom{j} \subseteq \dom{j-1} \cup \new{j-1}$, the minimality of $j$ implies that $v \in \new{j-1}$. Further, by line \ref{receivedtwoago}, $v$ received $\mu$ for the first time in round $2i-3$. Hence, by the induction hypothesis, $v \in \new{i-1}$. By Lemma \ref{NewDisjoint}, it follows that $i=j$. Thus, $v \in \dom{i}$. Now, assume that $v$'s transmission occurred at line \ref{repeatsmsg}. By line \ref{receivedecho}, we know that $v$ received ``stay" in round $2i-2$. By the induction hypothesis, the nodes in $\new{i-1}$ with $x_2=1$ are the nodes that transmit ``stay" in round $2i-2$. It follows that $v$ is adjacent to exactly one node $w \in \new{i-1}$ whose label has $x_2=1$. By the definition of the labeling scheme, $w$ is adjacent to a node $v' \in \dom{i} \cap \dom{i-1}$. By the induction hypothesis, since $v' \in \dom{i-1}$, we know that $v'$ transmitted in round $2i-3$. By line \ref{receivedecho}, $v$ transmitted in round $2i-3$. Since $w \in \new{i-1}$, the induction hypothesis implies that $w$ received a message in round $2i-3$. Thus, $v=v' \in \dom{i}$.
			
			\item Proof of 1(b):
			
			First, suppose that $w$ receives $\mu$ for the first time in round $2i-1$. Since $w$ receives $\mu$ in round $2i-1$, it must be adjacent to exactly one node that transmits in round $2i-1$. By 1(a), we know that $\dom{i}$ is the set of nodes that transmit in round $2i-1$, which implies that $w$ is adjacent to exactly one node in $\dom{i}$. By the definition of $\new{i}$, it follows that $w \in \new{i}$.
			
			Conversely, suppose that $w \in \new{i}$. Then, by definition, $w$ is adjacent to exactly one node in $\dom{i}$. By 1(a), $\dom{i}$ is the set of nodes that transmit in round $2i-1$. It follows that $w$ receives message $\mu$ in round $2i-1$. If $w$ received $\mu$ for the first time in some round $t' < 2i-1$, then, by the induction hypothesis, $w$ is contained in some $\new{i'}$ where $i' < i$. This is impossible, by Lemma \ref{NewDisjoint}. Hence, $w$ received $\mu$ for the first time in round $2i-1$.
		\end{itemize}
	\end{itemize}
	
\end{proof}



We now prove that our algorithm ensures that all nodes in $G \setminus \{s_G\}$ are informed within $2n$ rounds.

\begin{theorem}\label{BcastCorrect}
	Consider any $n$-node unlabeled graph $G$ with a designated source node $s_G$ with source message $\mu$. By applying the 2-bit labeling scheme $\lambda$ and then executing algorithm $\cB$, all nodes in $G \setminus \{s_G\}$ are informed within $2n-3$ rounds.
\end{theorem}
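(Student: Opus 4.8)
The plan is to reduce the theorem to Lemma~\ref{NewInformed} together with the structural facts already established about the sequences. The key observation is that Lemma~\ref{NewInformed}, part 1(b), precisely characterizes when each node becomes informed: a node $w$ receives $\mu$ for the first time in round $2i-1$ if and only if $w \in \new{i}$. By Corollary~\ref{NewPartition}, the sets $\new{1}, \ldots, \new{\ell-1}$ partition $G \setminus \{s_G\}$, so every non-source node belongs to exactly one $\new{i}$ with $1 \le i \le \ell-1$. Combining these two facts, I would argue that each node $w \in G \setminus \{s_G\}$ is informed in round $2i-1$ for the unique $i \le \ell-1$ with $w \in \new{i}$, and hence every node is informed by the end of round $2(\ell-1)-1 = 2\ell-3$.

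The remaining work is to convert this into the stated numerical bound $2n-3$. First I would apply Lemma~\ref{BoundLastStage}, which gives $\ell \le n$, so that $2\ell - 3 \le 2n - 3$. This directly yields that every non-source node is informed within $2n-3$ rounds. I would state the argument cleanly: let $w$ be an arbitrary node in $G \setminus \{s_G\}$; by Corollary~\ref{NewPartition} there is a unique index $i \in \{1, \ldots, \ell-1\}$ with $w \in \new{i}$; by Lemma~\ref{NewInformed}(1b), $w$ receives $\mu$ for the first time in round $2i-1 \le 2(\ell-1)-1 = 2\ell - 3 \le 2n-3$.

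The main thing to be careful about is the off-by-one accounting at the endpoints, namely checking that the last stage contributing new nodes is indeed $\ell - 1$ rather than $\ell$. Since $\ell$ is defined as the smallest index with $\inform{\ell} = V(G)$, and $\inform{\ell} = \inform{1} \cup \bigcup_{j=1}^{\ell-1} \new{j}$ by Fact~\ref{unions}, all non-source nodes are accounted for by the sets $\new{1}, \ldots, \new{\ell-1}$; the set $\new{\ell}$ plays no role, consistent with Corollary~\ref{NewPartition}. I expect this endpoint bookkeeping, rather than any genuinely difficult argument, to be the only subtle point: the heavy lifting has already been done in Lemma~\ref{NewInformed}, and the theorem is essentially a corollary obtained by stitching that lemma to the partition property and the bound $\ell \le n$.
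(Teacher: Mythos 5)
Your proposal is correct and follows exactly the paper's own proof: fix an arbitrary non-source node $w$, use Corollary~\ref{NewPartition} to place it in a unique $\new{i}$ with $i \le \ell-1$, apply Lemma~\ref{NewInformed}(1b) to conclude it is informed in round $2i-1 \le 2\ell-3$, and finish with Lemma~\ref{BoundLastStage} to get $2\ell-3 \le 2n-3$. The endpoint bookkeeping you flag is already packaged into Corollary~\ref{NewPartition}, so no additional argument is needed.
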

\begin{proof}
	Consider an arbitrary node $w \in G \setminus \{s_G\}$. By Corollary \ref{NewPartition}, $w$ is contained in $\new{i}$ for exactly one $i \in \{1,\ldots,\ell-1\}$. By Lemma \ref{NewInformed}, $w$ receives $\mu$ for the first time in round $2i-1 \leq 2\ell-3$. By Lemma \ref{BoundLastStage}, we have that $2\ell - 3 \leq 2n - 3$, as desired.
\end{proof}

\section{Acknowledged Broadcast}\label{ackbroadcasting}
To solve acknowledged broadcast, we provide an algorithm $\cB_{ack}$ in which the source node $s_G$ receives an ``ack" message in some round $t$ after all nodes in $G \setminus \{s_G\}$ have received $\mu$. At a high level, $\cB_{ack}$ is obtained from $\cB$ by considering a particular node $z$ that receives $\mu$ last when $\cB$ is executed on $G$. An additional bit $x_3$ in each node's label is used to identify $z$. Once it receives $\mu$, node $z$ initiates the acknowledgement process by immediately transmitting an ``ack" message that contains the round number $k$ in which it first received $\mu$. The (unique) neighbour of $z$ that transmitted in round $k$ will receive this message, and it immediately transmits an ``ack" message that contains the round number $k'$ in which it first received $\mu$. This process continues until the source node receives an ``ack" message. The difficulty is that each node must know the round number in which it received $\mu$, and the round numbers in which it transmits. This is implemented as follows. The source node appends ``1" to its first transmitted message. Every other node determines the round number by recording the number that is appended to the first received message containing $\mu$, and appends the round number (appropriately increased) whenever it transmits. The formal description of our acknowledged broadcast algorithm $\cB_{ack}$ with source message $\mu$ is provided in Algorithm \ref{bcastackpseudo}. We assume that there is a special ``ack" message that is distinct from the source message and the ``stay" message.

\subsection{The Labeling Scheme $\lambda_{ack}$}\label{labelingack}

The labeling scheme is identical to $\lambda$ except that one node $z$ will have a new label. This can be represented using an additional bit, $x_3$, which is 1 for $z$ and 0 for all other nodes. {The node $z$ is chosen as follows: label $G$ using labeling scheme $\lambda$ and execute $\cB$ on the resulting labeled graph, then determine the first round $r$ after which there are no uninformed nodes, and choose $z$ to be a node that receives $\mu$ in round $r$. If there is more than one such node, choose $z$ arbitrarily from among them. 
	
We note that the labeling scheme $\lambda_{ack}$ will never assign certain labels to any node, which means we may safely use these labels in later schemes that are built on top of $\lambda_{ack}$. At a high level, this is because $z$ is the only node with bit $x_3$ set to 1, and, as there are no remaining uninformed nodes after $z$ receives $\mu$, our labeling scheme will set $z$'s bits $x_1$ and $x_2$ to 0 to indicate that $z$ should not transmit after receiving $\mu$.

\begin{fact}\label{labelsUnused}
	For any graph $G$, when the labeling scheme $\lambda_{ack}$ is applied to $G$, no node is labeled with 101 or 111 or 011.
\end{fact}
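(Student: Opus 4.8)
The plan is to exploit the fact that all three forbidden labels $101$, $111$, and $011$ share the third bit $x_3 = 1$. By the definition of $\lambda_{ack}$, the only node with $x_3 = 1$ is the distinguished node $z$; every other node has $x_3 = 0$ and therefore cannot receive any of these labels. So it suffices to show that $z$ itself is labeled neither $101$, $111$, nor $011$, i.e., that its first two bits are not $10$, $11$, or $01$. Equivalently, I would show that $z$ has $x_1 = 0$ and $x_2 = 0$, so that $z$'s label is forced to be $001$.

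First I would pin down where $z$ sits in the sequence construction. By Lemma~\ref{NewInformed}, a node receives $\mu$ for the first time precisely in an odd round $2i-1$, and exactly when it belongs to $\new{i}$; by Corollary~\ref{NewPartition} these first-reception rounds range over $i \in \{1,\ldots,\laststage-1\}$, the latest being round $2\laststage-3$. Since $z$ is chosen as a node that receives $\mu$ in the first round $r$ after which no uninformed nodes remain, we must have $r = 2\laststage - 3$, and therefore $z \in \new{\laststage-1}$.

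Next I would establish $x_1 = 0$ at $z$. Unrolling the recurrence $\dom{i} \subseteq \dom{i-1} \cup \new{i-1}$ from $\dom{1} = \{s_G\}$ gives $\dom{i} \subseteq \{s_G\} \cup \bigcup_{j=1}^{i-1} \new{j}$ for every defined $i$. For $i \leq \laststage - 1$ this upper bound omits $\new{\laststage-1}$, while $\dom{\laststage} = \emptyset$ because $\uninf{\laststage} = \emptyset$ forces $\frontier{\laststage} = \emptyset$ and hence the minimal dominating subset is empty. Combining these with Lemma~\ref{NewDisjoint} and the fact that $s_G \notin \new{j}$ (which follows from Facts~\ref{NewContainment} and~\ref{unions}, since $\new{j} \subseteq \uninf{j} \subseteq V(G)\setminus\{s_G\}$) shows that $z \in \new{\laststage-1}$ lies in no $\dom{i}$, so by the definition of $\lambda$ we get $x_1 = 0$ at $z$.

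Finally I would argue $x_2 = 0$ at $z$; this final-stage boundary case is the only delicate point and the place requiring the most care. The scheme sets $x_2 = 1$ only at nodes selected from some $\new{i}$ having a neighbour in $\dom{i+1} \cap \dom{i}$. Because the sets $\new{i}$ are pairwise disjoint (Lemma~\ref{NewDisjoint}) and $z \in \new{\laststage-1}$, the only stage at which $z$ could be selected is $i = \laststage - 1$, which would require $\dom{\laststage} \cap \dom{\laststage-1} \neq \emptyset$. But $\dom{\laststage} = \emptyset$, as already noted, so no node of $\new{\laststage-1}$ is selected and $x_2 = 0$ at $z$. Hence $z$ is labeled $001$, and the labels $101$, $111$, $011$ are never assigned. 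The main obstacle is correctly handling the terminal stage $\laststage$: identifying $z \in \new{\laststage-1}$ and confirming $\dom{\laststage} = \emptyset$ (or, under the reading where the construction simply halts once $\inform{\laststage} = V(G)$, that no stage-$(\laststage-1)$ selection is ever performed, which yields the same conclusion).
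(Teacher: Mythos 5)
Your proof is correct and follows the same skeleton as the paper's: since all of 101, 111, 011 have $x_3=1$ and only $z$ gets $x_3=1$, it suffices to show $z$'s label is 001, which both arguments derive from the fact that $z$ is informed in the final round $2\laststage-3$, so the terminal-stage sets are empty. The differences are in execution, and they favour you. For $x_1$, the paper argues by contradiction (a smallest $j$ with $z \in \dom{j}$ forces $z \in \new{j-1}$, hence $\new{i} = \emptyset$ for all $i \geq j$, hence $\dom{j} = \emptyset$), whereas you unroll $\dom{i} \subseteq \{s_G\} \cup \bigcup_{j=1}^{i-1}\new{j}$ directly; both work. More significantly, the paper asserts that showing $z \notin \dom{i}$ for all $i$ suffices to conclude \emph{both} $x_1 = 0$ and $x_2 = 0$, but that reduction is literally valid only for $x_1$: the scheme sets $x_2 = 1$ at a node of $\new{i}$ chosen as a neighbour of some node in $\dom{i+1} \cap \dom{i}$, a condition that does not require the chosen node itself to lie in any $\dom{j}$. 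Your proof supplies exactly the step needed to close this gloss --- $z \in \new{\laststage-1}$, the $\new{i}$ are pairwise disjoint, and $\dom{\laststage} = \emptyset$, so no stage-$(\laststage-1)$ selection ever occurs --- making your argument, if anything, more complete than the paper's on the $x_2$ bit. Your hedging about whether $\dom{\laststage}$ is defined-and-empty or the construction halts before computing it is also consistent with how the paper itself treats these sets (it explicitly uses $\new{i} = \dom{i} = \emptyset$ for all $i \geq \laststage$).
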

\begin{proof}
	By definition, node $z$ is the only node that has bit $x_3 = 1$, so it is sufficient to prove that node $z$ has bit $x_1 = x_2 = 0$. By the definition of $\lambda$ in Section \ref{labeling}, it is sufficient to prove that there is no value of $i \geq 1$ such that $z \in \dom{i}$. To obtain a contradiction, assume that there exists an $i \geq 1$ such that $z \in \dom{i}$, and let $j$ be the smallest such $i$. Then, by the definition of $\dom{j}$, the fact that $z \in \dom{j}$ implies that $z \in \new{j-1}$. By the choice of $z$ by $\lambda_{ack}$ and Lemma \ref{NewInformed}, node $z$ receives $\mu$ for the first time in round $2(j-1)-1$, and there are no remaining uninformed nodes after this round. In particular, by Lemma \ref{NewInformed}, this means that $\new{i} = \emptyset$ for all $i \geq j$, and so $\inform{j} = V(G)$ by Lemma \ref{NewNonEmpty}. This implies that $\uninf{j} = \emptyset$, so $\frontier{j} = \emptyset$, which means $\dom{j} = \emptyset$, which contradicts the fact that $z \in \dom{j}$. 
\end{proof}
}

\subsection{Correctness of algorithm $\cB_{ack}$}\label{analysisack}

To prove the correctness of $\cB_{ack}$, we first observe that all transmissions of ``ack" messages occur after all transmissions of $\mu$ and ``stay" messages, i.e., the broadcast and the acknowledgement process do not interfere with one another. The first two observations follow from Lemma \ref{NewInformed} and the fact that $\new{i} = \dom{i} = \emptyset$ for all $i \geq \ell$. 

\begin{obs}\label{lastround}
	The last round in which a node receives $\mu$ for the first time is $2\ell-3$.
\end{obs}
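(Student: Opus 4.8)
The plan is to prove Observation~\ref{lastround}, which asserts that the last round in which some node receives $\mu$ for the first time is exactly round $2\laststage - 3$. Recall that $\laststage$ is defined as the smallest index $i$ with $\inform{i} = V(G)$, so it captures the stage at which every node has become informed in the combinatorial sequence construction.

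First I would establish the upper bound: no node can receive $\mu$ for the first time in a round later than $2\laststage - 3$. By Corollary~\ref{NewPartition}, the sets $\new{1},\dots,\new{\laststage - 1}$ partition $G \setminus \{s_G\}$, so every non-source node lies in exactly one $\new{i}$ with $1 \le i \le \laststage - 1$. By Lemma~\ref{NewInformed}, part 1(b), a node $w \in \new{i}$ receives $\mu$ for the first time precisely in round $2i - 1$. Since $i \le \laststage - 1$, we have $2i - 1 \le 2(\laststage - 1) - 1 = 2\laststage - 3$, which gives the desired upper bound. This step is essentially the same argument already used in the proof of Theorem~\ref{BcastCorrect}.

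Next I would show the bound is tight, i.e.\ that some node actually receives $\mu$ for the first time in round $2\laststage - 3$. The key is that by minimality of $\laststage$ we have $\inform{\laststage - 1} \neq V(G)$, so by Lemma~\ref{NewNonEmpty} the set $\new{\laststage - 1}$ is nonempty. Picking any $w \in \new{\laststage - 1}$ and applying Lemma~\ref{NewInformed} part 1(b) again, this $w$ receives $\mu$ for the first time in round $2(\laststage - 1) - 1 = 2\laststage - 3$. Combining this with the upper bound yields that $2\laststage - 3$ is exactly the last such round.

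The main obstacle, and really the only subtlety, is being careful that the observation statement is about the \emph{round in which a first reception occurs}, not merely an index bound, so I must invoke the tightness direction (nonemptiness of $\new{\laststage-1}$) and not just the partition/upper-bound half. Everything else follows directly from results already proved earlier in the section, so the argument is short; the translation between the combinatorial stage index $i$ and the actual round number $2i-1$ of algorithm $\cB$ is exactly what Lemma~\ref{NewInformed} supplies, and no new inductive work is required.
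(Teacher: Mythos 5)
Your proof is correct and follows essentially the same route as the paper, which dispenses with this observation in one line by citing Lemma~\ref{NewInformed} together with the fact that $\new{i} = \dom{i} = \emptyset$ for all $i \geq \laststage$. Your write-up merely makes explicit the two halves implicit in that remark: the upper bound via Corollary~\ref{NewPartition} and Lemma~\ref{NewInformed}(1b), and the tightness via the minimality of $\laststage$ and Lemma~\ref{NewNonEmpty}, which is a welcome (if slightly more verbose) level of detail.
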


\begin{obs}\label{onlyacks}
	No transmissions of $\mu$ nor ``stay" occur after round $2\ell-3$.
\end{obs}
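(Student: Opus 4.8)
The plan is to read off the exact set of transmitters in every round after $2\ell-3$ from Lemma \ref{NewInformed}, and then show that set is empty using the emptiness facts $\new{i} = \dom{i} = \emptyset$ for all $i \geq \ell$ noted just above the observation. Since Lemma \ref{NewInformed} is stated as an ``if and only if'', it tells us precisely which nodes transmit $\mu$ in each odd round $2i-1$ (exactly $\dom{i}$) and which nodes transmit ``stay'' in each even round $2i$ (exactly the nodes of $\new{i}$ whose label has $x_2=1$); these are the only $\mu$- and ``stay''-transmissions the algorithm ever produces. So it suffices to check, for each round $t > 2\ell-3$, that the corresponding transmitter set is empty, and I would do this by splitting on the parity of $t$.

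First I would handle the odd rounds: writing $t = 2i-1$, the hypothesis $t > 2\ell-3$ forces $i \geq \ell$, so by Lemma \ref{NewInformed} the nodes transmitting $\mu$ are those of $\dom{i}$, which is empty for $i \geq \ell$; no ``stay'' is transmitted in an odd round. For the even rounds, writing $t = 2i$, the hypothesis forces only $i \geq \ell-1$, and the ``stay''-transmitters are the nodes of $\new{i}$ with $x_2=1$. When $i \geq \ell$ this is immediate since $\new{i} = \emptyset$. The one remaining round, $t = 2\ell-2$ (i.e.\ $i = \ell-1$), is the case I expect to be the crux: here $\new{\ell-1}$ can be nonempty, so I must argue separately that none of its nodes carries $x_2=1$. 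For this I would unwind the definition of $\lambda$: a node $w \in \new{\ell-1}$ receives $x_2=1$ only if it is the node picked for some $v \in \dom{\ell} \cap \dom{\ell-1}$, and by Lemma \ref{NewDisjoint} the sets $\new{j}$ are pairwise disjoint, so $w$ cannot have acquired $x_2=1$ through any other stage. Since $\dom{\ell} = \emptyset$, the intersection $\dom{\ell} \cap \dom{\ell-1}$ is empty, hence no node of $\new{\ell-1}$ has $x_2=1$ and no ``stay'' is transmitted in round $2\ell-2$. Combining the odd case, the even case with $i \geq \ell$, and this final even case yields the claim.

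The main obstacle is precisely the round $2\ell-2$: a first glance suggests the ``stay'' messages triggered by the last batch of newly-informed nodes $\new{\ell-1}$ might spill past round $2\ell-3$. The resolution is structural rather than computational, which is why I single it out: because broadcast already completes at stage $\ell-1$, the set $\dom{\ell}$ is empty, and the labeling scheme $\lambda$ ties every $x_2=1$ label appearing in $\new{\ell-1}$ to membership in $\dom{\ell}\cap\dom{\ell-1}$, so the emptiness of $\dom{\ell}$ is exactly what suppresses the stray ``stay'' transmission.
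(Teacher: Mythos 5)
Your proof is correct and takes the same route as the paper, whose entire justification is that the observation follows from Lemma \ref{NewInformed} together with the fact that $\new{i} = \dom{i} = \emptyset$ for all $i \geq \laststage$. If anything, you are more careful than the paper's one-line argument: you correctly isolate round $2\laststage-2$ as the non-trivial case and supply the step the paper leaves implicit, namely that no node of $\new{\laststage-1}$ has $x_2=1$ because (by the definition of $\lambda$ and Lemma \ref{NewDisjoint}) such a label could only arise from a node of $\dom{\laststage} \cap \dom{\laststage-1}$, and $\dom{\laststage} = \emptyset$.
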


The next observation follows from Observation \ref{lastround} and the definitions of algorithms $\lambda_{ack}$ and $\cB_{ack}$.

\begin{obs}\label{ztransmits}
	The first transmission of ``ack" occurs in round $2\ell-2$, and is transmitted by the unique node $z$ whose label has $x_3 = 1$.
\end{obs}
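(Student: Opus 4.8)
The plan is to pin down the exact round in which $z$ first receives $\mu$, and then use the behaviour of $\cB_{ack}$ to show that $z$'s resulting ``ack" transmission is the earliest ``ack" transmission of the whole execution.

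First I would determine the round $r$ used to define $z$ in the labeling scheme $\lambda_{ack}$. By Observation \ref{lastround}, the last round in which any node receives $\mu$ for the first time is $2\ell-3$; consequently the first round after which there are no uninformed nodes is exactly $r = 2\ell-3$. By the choice of $z$ in $\lambda_{ack}$, node $z$ therefore receives $\mu$ for the first time in round $2\ell-3$, and $z$ is the unique node whose label has $x_3 = 1$. Next I would invoke the definition of $\cB_{ack}$: the node identified by $x_3 = 1$ responds to its first reception of $\mu$ by transmitting ``ack" in the immediately following round. Applied to $z$, this yields a transmission of ``ack" by $z$ in round $(2\ell-3)+1 = 2\ell-2$.

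The remaining, and main, step is to rule out any earlier ``ack" transmission, which I would do by a minimality argument over the rounds. Let $t^\ast$ be the first round in which any node transmits ``ack" (well-defined since round numbers are positive integers). From the structure of $\cB_{ack}$ there are exactly two ways a node can transmit ``ack": either it is $z$ reacting to its first reception of $\mu$, or it transmits ``ack" in direct response to having received an ``ack" in the previous round. In the latter case the transmitting node must have a neighbour that transmitted ``ack" strictly before round $t^\ast$, contradicting the minimality of $t^\ast$. Hence the transmission in round $t^\ast$ must be the initiation by $z$, which (since $z$ is the unique node with $x_3 = 1$ and first receives $\mu$ in round $2\ell-3$) occurs in round $2\ell-2$. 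Therefore $t^\ast = 2\ell-2$, and the round-$t^\ast$ transmitter is $z$, as claimed.

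I expect the principal obstacle to be justifying that the acknowledgement machinery added in $\cB_{ack}$ does not cause any node to transmit ``ack" spuriously during the broadcast phase, i.e., that the ``ack" initiation condition really fires only for $z$ and only upon its first reception of $\mu$. This is exactly where the uniqueness of $z$ (the sole node with $x_3 = 1$) together with Observation \ref{onlyacks} (no transmissions of $\mu$ or ``stay" after round $2\ell-3$) are needed, since they guarantee a clean temporal separation between the broadcast and acknowledgement phases and thereby validate the minimality argument above.
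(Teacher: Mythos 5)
Your proof is correct and takes essentially the same route as the paper, which disposes of this observation in one line by citing Observation \ref{lastround} together with the definitions of $\lambda_{ack}$ and $\cB_{ack}$: your argument simply fills in the details, pinning down that $z$ first receives $\mu$ in round $2\ell-3$ and then using the code structure (line \ref{ack-startack} versus line \ref{ack-repeatack}) with a minimality-of-$t^\ast$ argument to rule out any earlier ``ack''. One tiny remark: your minimality argument is already self-contained from the code and the uniqueness of $z$, so Observation \ref{onlyacks} is not actually needed for it, contrary to what your closing paragraph suggests.
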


\begin{algorithm}[H]
	\caption{$\cB_{ack}(\mu)$ executed at each node $v$}
	\label{bcastackpseudo}
	
	\begin{algorithmic}[1]
		
		\Statex\% Each node has a variable \texttt{sourcemsg}. The source node has this variable initially set to $\mu$, all other nodes have it initially set to $null$. Each node maintains a variable \texttt{informedRound} that keeps track of the first round in which it received $\mu$. Each non-source node maintains a variable \texttt{transmitRounds} that keeps track of the set of rounds in which it transmitted $\mu$.
		\State $\texttt{informedRound} \leftarrow null$
		\State $\texttt{transmitRounds} \leftarrow null$
		\For{each round $r$}
			\If{(never sent or received a message) {\bf and} (\texttt{sourcemsg} $\neq null$)}
		\Statex \algindent\algindent\% source node transmits $\mu$ in first round
		\State  transmit $(\texttt{sourcemsg},1)$
		\ElsIf{(\texttt{sourcemsg} = null)}
		\Statex \algindent\algindent\% has not previously received $\mu$, listen for transmission
		\If{(message $(m,k)$ is received) {\bf and} $(m \neq \textrm{``stay"})$}
		\State  $\texttt{sourcemsg} \leftarrow m$
		\State $\texttt{informedRound} \leftarrow k$ \label{ack-setinformed}
		\EndIf
		\Else
		\Statex \algindent\algindent\% the node received $\mu$ before round $r$
		\If{$v$ first received \texttt{sourcemsg} in round $r-2$} \label{ack-receivedtwoago}
		\If{$x_1 = 1$}\label{ack-x1set}
		\State transmit $(\texttt{sourcemsg},\texttt{informedRound}+2)$ \label{ack-transmitmsg}
		\State insert $\texttt{informedRound}+2$ into $\texttt{transmitRounds}$
		\EndIf
		\ElsIf{$v$ first received \texttt{sourcemsg} in round $r-1$}\label{ack-receivedinlast}
		\If{$x_3 = 1$}\label{ack-x3set}
		\Statex \algindent\algindent\algindent\algindent\% start acknowledgement process
		\State transmit $(\textrm{``ack"},\texttt{informedRound})$ \label{ack-startack}
		\ElsIf{$x_2 = 1$}\label{ack-x2set}
		\State transmit $(\textrm{``stay"},\texttt{informedRound}+1)$ \label{ack-transmitsecho}
		\EndIf
		\ElsIf{$v$ received $(\textrm{``stay"},k)$ in round $r-1$}\label{ack-receivedecho}
		\If{$v$ transmitted $\texttt{sourcemsg}$ in round $r-2$} \label{ack-transmittedtwoago}
		\State transmit $(\texttt{sourcemsg},k+1)$ \label{ack-repeatsmsg}
		\State insert $k+1$ into $\texttt{transmitRounds}$
		\EndIf
		\ElsIf{$v$ received $(\textrm{``ack"},k)$ in round $r-1$}\label{ack-receivedack}
		\If{$k$ is contained in \texttt{transmitRounds}} \label{ack-didtransmit}
		\State transmit $(\textrm{``ack"},\texttt{informedRound})$ \label{ack-repeatack}
		\EndIf
		\EndIf
		\EndIf
		\EndFor
	\end{algorithmic}
\end{algorithm}

%

\newpage
We now prove that the correct round number is appended to each message containing $\mu$, which is necessary for the correctness of the acknowledgement process.

\begin{lemma}\label{globalclock}
	The messages $(\mu,t)$ and $(``stay",t)$ are transmitted only in round $t$.
\end{lemma}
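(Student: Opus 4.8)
The plan is to prove, by strong induction on the round number $r$, a combined invariant that simultaneously controls the clock value carried by each message and the contents of the \texttt{informedRound} variables. Concretely, I would prove: (i) every message of the form $(\mu,t)$ or $(\textrm{``stay"},t)$ that is transmitted in round $r$ satisfies $t=r$; and (ii) every node that first receives $\mu$ in round $r$ sets its \texttt{informedRound} to $r$ when it does so at line \ref{ack-setinformed}. The key structural observation is that (ii) at round $r$ is an immediate consequence of (i) at round $r$: a node first receives $\mu$ in round $r$ only because exactly one neighbour transmits some message $(\mu,k)$ in round $r$, and (i) forces $k=r$, so \texttt{informedRound} is set to $r$. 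Since \texttt{informedRound} is assigned only on first reception (the branch guarded by \texttt{sourcemsg}$=null$), it retains this value thereafter. Thus it suffices to establish (i) by induction, carrying (ii) as a derived fact about all earlier rounds.

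For the base case $r=1$, the only transmission is the source's $(\texttt{sourcemsg},1)=(\mu,1)$, so (i) holds, and every neighbour that first hears $\mu$ in round $1$ sets \texttt{informedRound}$=1$. For the inductive step I would enumerate the three lines at which a $(\mu,\cdot)$ or $(\textrm{``stay"},\cdot)$ message can be emitted in round $r$ and check $t=r$ in each. At line \ref{ack-transmitmsg} (the $x_1=1$ case) the transmitting node first received $\mu$ in round $r-2$ by line \ref{ack-receivedtwoago}, so by (ii) at round $r-2$ its \texttt{informedRound} equals $r-2$ and it sends $(\mu,(r-2)+2)=(\mu,r)$. At line \ref{ack-transmitsecho} (the $x_2=1$ case) the node first received $\mu$ in round $r-1$ by line \ref{ack-receivedinlast}, so by (ii) at round $r-1$ its \texttt{informedRound} equals $r-1$ and it sends $(\textrm{``stay"},(r-1)+1)=(\textrm{``stay"},r)$. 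At line \ref{ack-repeatsmsg} the node received some $(\textrm{``stay"},k)$ in round $r-1$ (line \ref{ack-receivedecho}); since that ``stay'' was transmitted by a neighbour in round $r-1$, (i) at round $r-1$ gives $k=r-1$, whence it sends $(\mu,k+1)=(\mu,r)$. Together with the round-$1$ transmission, these exhaust the ways $\mu$ and ``stay'' messages are produced, completing the induction.

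The main thing to get right is the dependency structure, i.e., ensuring there is no circularity between the two halves of the invariant. The point I would emphasize is that (i) at round $r$ depends only on (ii) at rounds $r-1$ and $r-2$ and on (i) at round $r-1$ --- all strictly earlier --- whereas (ii) at round $r$ is \emph{derived} from (i) at round $r$ and is never used in proving it; once this ordering is fixed, each of the three cases is a one-line arithmetic check. Two minor points are worth stating explicitly so the case analysis is airtight: that \texttt{informedRound} is written exactly once per node, so (ii) is stable for all subsequent rounds, and that the only lines producing $(\mu,\cdot)$ or $(\textrm{``stay"},\cdot)$ messages are \ref{ack-transmitmsg}, \ref{ack-transmitsecho}, and \ref{ack-repeatsmsg} (besides the source's first transmission); the ``ack'' transmissions are irrelevant here since they carry neither $\mu$ nor ``stay''. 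I expect the setup of the combined invariant, rather than any individual computation, to be the only real obstacle.
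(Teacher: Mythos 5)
Your proof is correct and follows essentially the same approach as the paper's: an induction whose inductive step checks the same three transmission lines (\ref{ack-transmitmsg}, \ref{ack-transmitsecho}, \ref{ack-repeatsmsg}) and traces each timestamp back to an earlier correctly-stamped reception. The only differences are cosmetic --- you induct on the round number rather than on the timestamp carried in the message, and you make explicit the auxiliary invariant about \texttt{informedRound} that the paper uses implicitly.
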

\begin{proof}
	The proof is by induction on $t$. For the base case, $t=0$, we see that the source node sends $(\mu,0)$ in its first transmission, and no other nodes transmit before receiving $\mu$ for the first time. As induction hypothesis, assume that, for all $0 \leq t' < t$, a message $(\mu,t')$ or $(\textrm{``stay"},t')$ is only transmitted in round $t'$.
	
	First, suppose that a node $v$ transmits a message $(\textrm{``stay"},t)$. This occurs at line \ref{ack-transmitsecho}, which, by line \ref{ack-receivedinlast}, implies that $v$ received $\mu$ for the first time in round $t-1$ via some message $(\mu,t')$. By the induction hypothesis, $t' = t-1$. Therefore, $v$ sets \texttt{informedRound} equal to $t-1$ at line \ref{ack-setinformed}. So, when $v$ transmits $(\textrm{``stay"},\texttt{informedRound}+1)$, it follows that $\texttt{informedRound}+1= t$, as desired.
	
	Next, suppose that a node $v$ transmits a message $(\mu,t)$. If this transmission occurs at line \ref{ack-transmitmsg}, then, by line \ref{ack-receivedtwoago}, we know that $v$ received $\mu$ for the first time in round $t-2$ via some message $(\mu,t')$. By the induction hypothesis, $t' = t-2$. Therefore, $v$ sets \texttt{informedRound} equal to $t-2$ at line \ref{ack-setinformed}. So, when $v$ transmits $(\textrm{``stay"},\texttt{informedRound}+2)$, it follows that $\texttt{informedRound}+2 = t$, as desired. The other possibility is that the transmission by $v$ occurs at line \ref{ack-repeatsmsg}, which, by line \ref{ack-receivedecho}, implies that $v$ received a $(\textrm{``stay"},t')$ message in round $t-1$. By the induction hypothesis, $t'=t-1$.  So, when $v$ transmits $(\mu,t'+1)$, it follows that $t'+1 = t$, as desired.
\end{proof}

From Lemma \ref{globalclock}, it follows that if a node $v \neq s_G$ first receives $\mu$ in round $t$, then the \texttt{informedRound} variable at node $v$ is equal to $t$ in all subsequent rounds. Similarly, if a node $v \neq s_G$ transmits a message containing $\mu$ in round $t$, then the \texttt{transmitRounds} variable at node $v$ contains $t$ in all subsequent rounds. 

%

We complete the proof of correctness of $\cB_{ack}$ by showing that the source node will eventually receive an ``ack" message. First, we show that at most one node transmits ``ack" in any round, which implies that no collisions occur during the acknowledgement procedure.

\begin{lemma}\label{atmostone}
	After round $2\ell-3$, at most one node $v$ transmits in each round.
\end{lemma}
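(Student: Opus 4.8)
The plan is to prove the statement by induction on the round number $r$, over all $r > 2\laststage-3$. The crucial structural observation is that, after round $2\laststage-3$, the only messages that can be transmitted at all are ``ack'' messages: by Observation \ref{onlyacks}, no transmission of $\mu$ or ``stay'' occurs after round $2\laststage-3$, so lines \ref{ack-transmitmsg}, \ref{ack-repeatsmsg}, and \ref{ack-transmitsecho} are irrelevant. Moreover, the only lines of $\cB_{ack}$ that can produce an ``ack'' transmission are line \ref{ack-startack} and line \ref{ack-repeatack}, and line \ref{ack-startack} requires (via the test at line \ref{ack-receivedinlast}) that the node first received $\mu$ in the previous round; since by Observation \ref{lastround} no node first receives $\mu$ after round $2\laststage-3$, line \ref{ack-startack} can fire only in round $2\laststage-2$. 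Hence, apart from this single exceptional round, every ``ack'' transmission is a direct reaction to having received an ``ack'' in the previous round.

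For the base case $r = 2\laststage-2$, Observation \ref{ztransmits} states that the unique transmitter is the node $z$ with $x_3=1$. For the inductive step, assume the claim holds for round $r-1 \geq 2\laststage-2$. If no node transmits in round $r-1$, then no node receives an ``ack'' in round $r-1$, so line \ref{ack-repeatack} cannot fire; since $r > 2\laststage-2$ also rules out line \ref{ack-startack}, and Observation \ref{onlyacks} rules out the remaining lines, no node transmits in round $r$. Otherwise, by the induction hypothesis exactly one node $u$ transmits in round $r-1$, and by Observation \ref{onlyacks} this transmission is a message $(\textrm{``ack''},k)$, where $k$ is $u$'s value of \texttt{informedRound} (both line \ref{ack-startack} and line \ref{ack-repeatack} append \texttt{informedRound}). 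Because $u$ is the sole transmitter in round $r-1$, every neighbour of $u$ receives $(\textrm{``ack''},k)$ with no collision and no other node receives anything; thus the nodes eligible to transmit in round $r$ via line \ref{ack-repeatack} are precisely the neighbours $w$ of $u$ for which $k$ lies in \texttt{transmitRounds}, i.e., the neighbours of $u$ that transmitted $\mu$ in round $k$. Note that the source $s_G$ is automatically excluded here, since it maintains no \texttt{transmitRounds} variable; this is exactly how the acknowledgement process terminates once the ``ack'' reaches $s_G$.

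It remains to bound the number of such neighbours by one, which is the heart of the argument. By Lemma \ref{globalclock} and the remark following it, $k$ is the round in which $u$ first received $\mu$, which by Lemma \ref{NewInformed}(1b) is an odd round $2i-1$ with $u \in \new{i}$. By Lemma \ref{NewInformed}(1a), the set of nodes that transmitted $\mu$ in round $2i-1$ is exactly $\dom{i}$, so the candidate transmitters in round $r$ are the neighbours of $u$ lying in $\dom{i}$. But $u \in \new{i}$, and by the definition of $\new{i}$ (as the nodes of $\frontier{i}$ adjacent to \emph{exactly one} node of $\dom{i}$), $u$ has at most one neighbour in $\dom{i}$. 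Hence at most one node transmits in round $r$, completing the induction.

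I expect the main obstacle to be this final step: correctly tying the clock value $k$ carried inside the ``ack'' to $u$'s membership in a \emph{unique} $\new{i}$, and then invoking the single-dominator property built into the definition of $\new{i}$. The supporting bookkeeping, namely verifying that no transmission line other than \ref{ack-startack} and \ref{ack-repeatack} can fire after round $2\laststage-3$ and that the clean single-transmitter reception lets every neighbour receive the ``ack'', is routine but must be handled carefully so that the reduction to counting neighbours in $\dom{i}$ is justified.
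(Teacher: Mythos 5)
Your proof is correct, and its skeleton is identical to the paper's: induction on the round number, the base case settled by Observations \ref{lastround}--\ref{ztransmits}, and the same case split on whether zero or exactly one node transmitted in the preceding round. Where you differ is the justification of the crucial counting step. The paper disposes of it in one sentence: the value $k$ carried by the ``ack'' is the round in which the transmitter first received $\mu$, and in the radio model a node receives a message only when \emph{exactly one} of its neighbours transmits, so at most one neighbour can have $k$ in its \texttt{transmitRounds}. You instead derive the same fact from the structural machinery: $k = 2i-1$ with $u \in \new{i}$, the round-$(2i-1)$ transmitters are exactly $\dom{i}$ by Lemma \ref{NewInformed}(1a), and by definition a node of $\new{i}$ is adjacent to exactly one node of $\dom{i}$. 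These are two phrasings of the same underlying fact, since the definition of $\new{i}$ is precisely the formalization of collision-free reception; the paper's route is shorter and more elementary, while yours makes the bookkeeping explicit and is also more careful than the paper about the source node, which never populates \texttt{transmitRounds} and hence never relays an ``ack''. The one micro-step you elide is why $k$ is necessarily an odd round: Lemma \ref{NewInformed}(1b) alone characterizes first receipts in odd rounds but does not by itself exclude even ones, so you should invoke Corollary \ref{NewPartition} (every non-source node lies in some $\new{i}$) together with the uniqueness of the first-receipt round to pin down $k = 2i-1$.
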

\begin{proof}
	The proof is by induction on the round number $t$. For the base case, Observations \ref{lastround}-\ref{ztransmits} imply that the unique node $z$ with $x_3 = 1$ in its label transmits $(\textrm{``ack"},2\ell-3)$ in round $2\ell-2$. As induction hypothesis, assume that at most one node transmits in round $t \geq 2\ell-2$. If no node transmits in round $t$, then, from Observation \ref{onlyacks} and the code, no node transmits in round $t+1$. Otherwise, suppose that exactly one node $v$ transmits in round $t$. By Observation \ref{onlyacks}, this message is of the form $(\textrm{``ack"},k)$. At most one neighbour $w$ of $v$ contains $k$ in its \texttt{transmitRounds} variable since $v$ received $\mu$ in round $k$. From Observation \ref{onlyacks} and the code, no other node transmits in round $t+1$.
\end{proof}

We now show that the ``ack" message propagates through a sequence of nodes, where each node is contained in some $\dom{i}$. Further, the indices of the corresponding $\dom{i}$ sets form a decreasing sequence, which implies that $\{s_G\} = \dom{1}$ will eventually receive an ``ack" message.

\begin{lemma}\label{acksequence}
	For each $i \in \{0,\ldots,\ell-2\}$, in round $2\ell-2+i$, some node in $\dom{j}$ with $j \leq \ell-i-1$ receives $(\textrm{``ack"},2j-1)$.
\end{lemma}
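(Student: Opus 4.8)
The plan is to induct on $i$, following a single ``ack'' token as it travels down a strictly decreasing chain of $\dom{}$-indices until it reaches $\dom{1}=\{s_G\}$, the source. The invariant I will maintain couples the receiver with the integer it hears: at step $i$ the node receiving the ack lies in $\dom{j}$ for some $j\le \ell-i-1$, and the appended round number is exactly $2j-1$. The mechanism driving the recursion is that, by Lemma \ref{NewInformed}, a node of $\dom{j}$ is precisely a node that transmitted $\mu$ in round $2j-1$; it therefore recognizes an ack stamped $2j-1$ as its own (line \ref{ack-didtransmit}) and forwards its own \texttt{informedRound}, which will turn out to encode a strictly smaller $\dom{}$-index.

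For the base case $i=0$, Observation \ref{ztransmits} states that in round $2\ell-2$ the unique node $z$ with $x_3=1$ transmits $(\textrm{``ack"},2\ell-3)$. Since $\lambda_{ack}$ chooses $z$ to first receive $\mu$ in the final round $2\ell-3=2(\ell-1)-1$, Lemma \ref{NewInformed} places $z$ in $\new{\ell-1}$, so $z$ has exactly one neighbour $u$ in $\dom{\ell-1}$. By Lemma \ref{atmostone}, $z$ is the only transmitter in round $2\ell-2$, so $u$ hears $(\textrm{``ack"},2(\ell-1)-1)$ without a collision; as $\ell-1=\ell-0-1$, the invariant holds with $j=\ell-1$.

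For the inductive step, assume that in round $2\ell-2+i$ some node $v\in\dom{j}$ with $j\le \ell-i-1$ receives $(\textrm{``ack"},2j-1)$, and take $j\ge 2$ (when $j=1$ the receiver is $\dom{1}=\{s_G\}$ and the acknowledgement is complete). First, $v\in\dom{j}$ gives, via Lemma \ref{NewInformed}, that $v$ transmitted $\mu$ in round $2j-1$, so by Lemma \ref{globalclock} the number $2j-1$ lies in $v$'s \texttt{transmitRounds}; the test at line \ref{ack-didtransmit} therefore passes and $v$ retransmits an ack stamped with its \texttt{informedRound} in round $2\ell-1+i$. Next, letting $k$ be the unique index with $v\in\new{k}$ (Corollary \ref{NewPartition}), node $v$ first heard $\mu$ in round $2k-1$, so its \texttt{informedRound} equals $2k-1$ and the forwarded message is $(\textrm{``ack"},2k-1)$. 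Writing $j_0$ for the least index with $v\in\dom{j_0}$, the minimality of $\dom{j_0}$ together with $\dom{j_0}\subseteq\dom{j_0-1}\cup\new{j_0-1}$ forces $v\in\new{j_0-1}$, hence $k=j_0-1\le j-1$. Finally, $v\in\new{k}$ has exactly one neighbour $u\in\dom{k}$, namely the node that informed it in round $2k-1$, and this $u$ holds $2k-1$ in its \texttt{transmitRounds}; since Lemma \ref{atmostone} makes $v$ the only transmitter in round $2\ell-1+i$, node $u$ hears $(\textrm{``ack"},2k-1)$, with $u\in\dom{k}$ and $k\le j-1\le \ell-(i+1)-1$. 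This is exactly the invariant at step $i+1$.

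I expect the crux to be this last part: certifying that precisely the intended node $u$ receives the forwarded ack, collision-free, and that the stamp it carries names a strictly smaller $\dom{}$-index. This is where three already-available ingredients must be stitched together — the definitional fact that a member of $\new{k}$ has a single neighbour in $\dom{k}$; the identification, via Lemmas \ref{NewInformed} and \ref{globalclock}, of that neighbour as exactly the node whose \texttt{transmitRounds} contains $2k-1$; and Lemma \ref{atmostone} to rule out any competing transmitter. The accounting $k=j_0-1\le j-1$ in the step is the one place where the chain is actually forced to descend, and hence to arrive at $\dom{1}=\{s_G\}$, delivering the ack to the source.
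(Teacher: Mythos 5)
Your proof is correct and takes essentially the same route as the paper's: the identical induction on $i$, with the base case resting on Observations \ref{lastround} and \ref{ztransmits} plus Lemma \ref{atmostone}, and the inductive step using Lemma \ref{NewInformed} to show that the receiving node in $\dom{j}$ passes the test at line \ref{ack-didtransmit} and forwards an ack stamped with a strictly smaller index, collision-free again by Lemma \ref{atmostone}. Your two small deviations --- deriving the index decrease via the minimal index $j_0$ with $v \in \dom{j_0}$ instead of the paper's remark that a node must be informed before it transmits, and explicitly setting aside the case $j=1$ --- are cosmetic, since the paper's own inductive step implicitly makes the same assumption that the receiver is not the source.
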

\begin{proof}
	The proof is by induction on $i$. For the base case $i=0$, Observations \ref{lastround} and \ref{ztransmits} imply that $z$ transmits an $(\textrm{``ack"},2\ell-3)$ message in round $2\ell-2$. By Lemma \ref{atmostone}, no other node transmits in round $2\ell-2$, so all of $z$'s neighbours receive the transmitted ``ack" message. Since $z$ received $\mu$ in round $2\ell-3$, it follows that a neighbour $z'$ of $z$ transmitted $\mu$ in round $2\ell-3$. By Lemma \ref{NewInformed}, $z' \in \dom{\ell-1}$. Therefore, the statement is satisfied with $j=\ell-1$, which completes the base case.
	
	As induction hypothesis, assume that, for some $i \in \{0,\ldots,\ell-1\}$, in round $2\ell-2+i$, some node $w \in \dom{j}$ with $j \leq \ell-i-1$ receives $(\textrm{``ack"},2j-1)$. Since $w \in \dom{j}$, Lemma \ref{NewInformed} implies that $w$ transmitted $\mu$ in round $2j-1$. Therefore, its \texttt{transmitRounds} variable contains $2j-1$. By lines \ref{ack-receivedack}-\ref{ack-repeatack} of $\cB_{ack}$, it follows that $w$ transmits $(\textrm{``ack"},\texttt{informedRound})$ in round $2\ell-1+i$. We note that the value of \texttt{informedRound} at $w$ must be less than $2j-1$, since $w$ must have received $\mu$ for the first time before it transmitted $\mu$ in round $2j-1$. From Lemma \ref{NewInformed}, we conclude that $\texttt{informedRound} = 2j'-1$ for some $j' < j$. So we have shown that $w$ transmits $(\textrm{``ack"},2j'-1)$ for some $j' < j$ in round $2\ell-1+i$. By Lemma \ref{atmostone}, no other node transmits in round $2\ell-1+i$, so all of $w$'s neighbours receive the transmitted ``ack" message in round $2\ell-1+i$. Since $w$ received $\mu$ in round $2j'-1$, it follows that a neighbour $w'$ of $w$ transmitted $\mu$ in round $2j'-1$. By Lemma \ref{NewInformed}, $w' \in \dom{j'}$. To summarize, we have shown that in round $2\ell-2+(i+1)$, some node $w' \in \dom{j'}$ with $j' \leq j-1 \leq \ell-i-2 = \ell-(i+1)-1$ receives $(\textrm{``ack"},2j'-1)$, which completes the induction.
\end{proof}

\begin{corollary}\label{sourceack}
	There exists a round $t \in \{2\ell-2,\ldots,3\ell-4\}$ in which the source node receives an ``ack" message.
\end{corollary}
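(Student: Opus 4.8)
The plan is to read off the corollary almost immediately from Lemma \ref{acksequence}, which tracks how the single ``ack'' message travels backward through successively smaller dominating sets. The crucial feature to exploit is the index bound $j \le \ell - i - 1$: as the round offset $i$ grows, the index $j$ of the dominating set whose member receives the ``ack'' is forced to shrink, and the source is exactly the member of $\dom{1}$.

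Concretely, I would instantiate Lemma \ref{acksequence} at the largest admissible offset $i = \ell - 2$. This yields a node lying in $\dom{j}$ with $j \le \ell - (\ell-2) - 1 = 1$ that receives an ``ack'' message in round $2\ell - 2 + (\ell-2) = 3\ell - 4$. Since the sets $\dom{j}$ are only defined for $j \ge 1$, the bound $j \le 1$ pins down $j = 1$, and by the initialization $\dom{1} = \{s_G\}$. Hence the unique member of $\dom{1}$, namely the source, receives an ``ack'' message in round $3\ell - 4$, which lies in the claimed interval $\{2\ell - 2, \ldots, 3\ell - 4\}$.

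I expect no real obstacle here; the only points requiring a line of care are (i) recording that $\dom{j}$ is defined only for $j \ge 1$, so that $j \le 1$ indeed forces $j = 1$, and (ii) identifying ``a node in $\dom{1}$ receives an ack'' with ``the source receives an ack'' via $\dom{1} = \{s_G\}$. If one prefers to justify why the statement is phrased as an interval rather than pinning down the single round $3\ell - 4$, one can alternatively observe that the indices $j$ produced by Lemma \ref{acksequence} strictly decrease as $i$ increases (each step forwards the ack to a node in $\dom{j'}$ with $j' < j$), so $j = 1$ may already be reached at some offset $i < \ell - 2$; in any case the source is guaranteed to receive an ``ack'' at some round within $\{2\ell - 2, \ldots, 3\ell - 4\}$.
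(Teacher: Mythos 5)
Your primary argument --- instantiating Lemma \ref{acksequence} at $i = \ell-2$ to conclude that the source receives an ``ack'' in round exactly $3\ell-4$ --- is formally licensed by the lemma's statement, but it proves something that is in general false, and this is worth understanding. The ``ack'' chain does not pass through $\dom{\ell-1}, \dom{\ell-2}, \ldots, \dom{1}$ one index at a time: each step sends the ``ack'' from a node $w \in \dom{j}$ back to the node that first informed $w$, which lies in $\dom{j'}$ for some $j' < j$ that may be much smaller than $j-1$ (e.g., a node informed in round $1$ can persist in $\dom{j}$ for large $j$, so the chain can jump from $\dom{j}$ directly to $\dom{1}$). Once the chain reaches the source, it dies: the source's \texttt{transmitRounds} never contains round $1$ (its initial transmission does not insert into that variable) and its \texttt{informedRound} is $null$, so the source never forwards an ``ack''. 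Consequently, if the chain reaches $\dom{1}$ at some offset $i^* < \ell-2$, then \emph{no} node receives any ``ack'' in rounds $2\ell-2+i$ for $i > i^*$, and the lemma's literal statement fails for those $i$. (This is really a looseness in how Lemma \ref{acksequence} is phrased: its induction step tacitly assumes the receiving node is not the source, so what its proof actually establishes is the guarantee up to the round in which the source receives the ``ack''.) Your deduction at $i=\ell-2$ inherits this over-strong reading, so it cannot stand as the proof.

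The argument you relegate to a fallback in your last paragraph is the correct one, and it is what the corollary's interval phrasing reflects: the index $j$ strictly decreases at every step of the chain (each forwarding goes to $\dom{j'}$ with $j' < j$), it starts at $\ell-1$ in round $2\ell-2$, and it is bounded below by $1$; hence after at most $\ell-2$ forwarding rounds the ``ack'' reaches a node of $\dom{1} = \{s_G\}$, i.e., the source receives an ``ack'' in some round $t \in \{2\ell-2,\ldots,3\ell-4\}$ --- but not in a round one can pin down in advance. So: promote your fallback to the main argument and drop the instantiation at $i=\ell-2$.
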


The correctness of $\cB_{ack}$ follows directly from Corollary \ref{sourceack}, which gives us the main result of this section.

\begin{theorem}\label{BcastAckCorrect}
	Consider any $n$-node unlabeled graph $G$ with a designated source node $s_G$ with source message $\mu$. By applying the 3-bit labeling scheme $\lambda_{ack}$ and then executing algorithm $\cB_{ack}$, all nodes in $G \setminus \{s_G\}$ are informed by round $t \leq 2n-3$, and $s_G$ receives an ``ack" message by round $t' \in \{t+1,\ldots,t+n-2\}$.
\end{theorem}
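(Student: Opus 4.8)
The plan is to treat the two assertions separately, leaning on the structural results already proved for $\cB_{ack}$.

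\emph{The informing bound.} First I would note that the informing phase of $\cB_{ack}$ coincides with that of $\cB$: by Fact~\ref{labelsUnused} the distinguished node $z$ has $x_1=x_2=0$, so $\lambda_{ack}$ and $\lambda$ prescribe exactly the same transmissions of $\mu$ and of ``stay'', while $z$'s only extra transmission is the ``ack'' it sends in round $2\ell-2$ (Observation~\ref{ztransmits}), after every node is already informed. Hence the characterization of informing rounds is unchanged, and Observation~\ref{lastround} gives that the last round in which any node first receives $\mu$ is $2\ell-3$. I would set $t=2\ell-3$, so that all nodes of $G\setminus\{s_G\}$ are informed by round $t$; Lemma~\ref{BoundLastStage} ($\ell\le n$) then yields $t\le 2n-3$, establishing the first claim.

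\emph{The acknowledgement bounds.} For the second claim I would invoke Corollary~\ref{sourceack}: the source receives an ``ack'' in some round $t'\in\{2\ell-2,\ldots,3\ell-4\}$. The lower endpoint gives $t'\ge 2\ell-2=t+1$ directly, which is the lower bound of the stated range. For the upper bound I would use the chain structure made explicit in Lemma~\ref{acksequence}: starting from round $2\ell-2$, the ``ack'' travels one hop per round through nodes lying in $\dom{j}$ sets whose indices $j$ strictly decrease until reaching $\dom{1}=\{s_G\}$. As these indices are distinct elements of $\{1,\ldots,\ell-1\}$, there are at most $\ell-1$ hops, so $t'\le 3\ell-4=t+(\ell-1)$.

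\emph{Converting to the stated $n$-dependent bound.} To reach $t'\le t+(n-2)$ it then suffices to show $\ell-1\le n-2$, i.e.\ $\ell\le n-1$. Here I would exploit Corollary~\ref{NewPartition}, which partitions the non-source nodes as $\new{1},\ldots,\new{\ell-1}$, so that $n-1=\sum_{i=1}^{\ell-1}|\new{i}|$; since each set is nonempty by Lemma~\ref{NewNonEmpty}, as soon as some $\new{i}$ contains two or more nodes we obtain $n-1\ge\ell$, and the desired bound $t'\le t+(\ell-1)\le t+(n-2)$ follows at once from the previous paragraph.

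\emph{The main obstacle.} The remaining, and hardest, case is the rigid one in which every $\new{i}$ is a singleton, forcing each $\dom{i}$ to be a single node as well. Here the counting argument is tight, and a finer, round-by-round analysis of the acknowledgement chain is needed: one must follow exactly how the strictly decreasing sequence of $\dom{j}$ indices descends to $\dom{1}$ and pin down the precise round in which $s_G$ is reached, rather than relying on the worst-case estimate $3\ell-4$ of Lemma~\ref{acksequence}. I expect this delicate bookkeeping for the singleton case — where the acknowledgement chain is longest — to be the step requiring the most care.
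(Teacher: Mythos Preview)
Your overall strategy---cite Observation~\ref{lastround} and Lemma~\ref{BoundLastStage} for the informing bound, and Corollary~\ref{sourceack} for the acknowledgement bound---is exactly what the paper does. The paper's entire proof is the one sentence ``follows directly from Corollary~\ref{sourceack}''; it does not verify the precise endpoints of the interval $\{t+1,\ldots,t+n-2\}$ at all.

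Your extra scrutiny of the constants is therefore more than the paper provides, and the obstacle you flag is genuine---but it is not a gap in your argument, it is an off-by-one in the theorem statement itself. From Corollary~\ref{sourceack} you correctly extract $t'\le 3\ell-4 = t+(\ell-1)$, and with Lemma~\ref{BoundLastStage} this yields only $t'\le t+(n-1)$. No ``delicate bookkeeping for the singleton case'' will improve this to $t+(n-2)$: on the path $P_n$ with the source at an endpoint, every $\new{i}$ is a singleton, $\ell=n$, the last node is informed at $t=2n-3$, and the ack travels back one edge per round, reaching $s_G$ in round $3n-4 = t+(n-1)$. (Already for $n=3$ one checks $t=3$ and $t'=5\notin\{4\}$.) So the interval in the theorem should read $\{t+1,\ldots,t+n-1\}$; with that correction your argument is complete and no further case analysis is needed.
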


Finally, we note that $\cB$ and $\cB_{ack}$ can be used to ensure that there is a common round in which all nodes know that the broadcast of the source's message $\mu$ has been completed. First, run $\cB_{ack}$, and have the source node record the round number $m$ in which it first receives an ``ack" message. Then, the source executes $\cB$ with message $m$. All nodes will receive the value of $m$ before round $2m$. So, in round $2m$, all nodes know that the original broadcast of $\mu$ has been completed.

{
\section{Broadcast from an Arbitrary Source}

In this section, we consider the more difficult scenario in which the source node is not designated in $G$ when the labeling scheme is applied. We provide a labeling scheme of length 3 and a universal deterministic algorithm $\mathcal{B}_{arb}$ that solves (acknowledged) broadcast regardless of which node initially knows the source message. 

\subsection{The Labeling Scheme $\lambda_{arb}$}
	Choose an arbitrary node $r$ and label this node using the string 111. Apply the labeling scheme $\lambda_{ack}$ to the remaining nodes in the network, but use $r$ as the source node (as there is no designated source $s_G$). By Fact \ref{labelsUnused}, note that $\lambda_{ack}$ does not assign the label 111 to any node, so the node $r$ is a unique node in the network that our algorithm can use to play a special role in coordinating the broadcast, regardless of which node is the actual source $s_G$. Let $z$ be the node labeled 001 by $\lambda_{ack}$, i.e., the node that initiates the acknowledgement process in an execution of $\mathcal{B}_{ack}$.

\subsection{Algorithm $\mathcal{B}_{arb}$}
\begin{enumerate}
	\item Perform an acknowledged broadcast using $\mathcal{B}_{ack}$ with node $r$ as source and with message ``initialize". Each node $v$ stores in a variable $t_v$ the timestamp value contained in the first ``initialize" message it received. In particular, node $r$ sets $t_r$ to 0. When starting the acknowledgement process, node $z$ appends to the ``ack" message the timestamp value $T = t_z$. This step of the algorithm ends when $r$ receives the ``ack" message, at which point it knows the value of $T$ and it knows that all nodes have received ``initialize".
	\item Perform an acknowledged broadcast using a modified version of $\mathcal{B}_{ack}$ with node $r$ as the source and with message (``ready",$T$). The modification to the algorithm is that the node $z$ does not initiate the acknowledgement process. Instead, when the source node $s_G$ receives the ``ready" message, it waits $T$ rounds, then initiates the acknowledgement process (as described in $\mathcal{B}_{ack}$), but with the source message $\mu$ appended to the ``ack" message. (Waiting $T$ rounds ensures that this acknowledgement process started by $s_G$ does not begin until the ``ready" broadcast has completed.) This step of the algorithm ends when $r$ receives the ``ack" message, at which point it knows the source message $\mu$. Further, all nodes know the value of $T$.
	\item Perform a broadcast using $\mathcal{B}$ with node $r$ as source and with message $\mu$. At the end of this broadcast, all nodes know the source message $\mu$. If each node $v$ waits $T - t_v$ rounds after receiving $\mu$ in this step, then the algorithm solves acknowledged broadcast, as all nodes can be sure that all nodes have received $\mu$.
\end{enumerate}
%

}

\section{Conclusion}

We presented a universal deterministic broadcast algorithm using labeling schemes of constant length that works for arbitrary radio networks. Our schemes are of length 2, and we showed how to solved acknowledged broadcast with schemes of length 3 (but only 5 different labels). {In the case where the source node is not designated when the labeling scheme is applied, our scheme also has length 3, but uses 6 different labels. It would be interesting to determine if schemes using fewer than 4 different labels are sufficient for broadcast. We do not have any impossibility results beyond the trivial 1-bit lower bound (2 different labels), and we are intrigued by the possibility that there exists a scheme of length 1 for broadcast.} A positive answer can be obtained for broadcast in graphs where each node's distance to the source is at most 2: use $\lambda$ and $\cB$ from Section \ref{broadcasting}, but use only the bit $x_2$, and modify the definitions of $\frontier{i}$ and $\dom{i}$ by changing instances of $\dom{i-1} \cup \new{i-1}$ to $\dom{i-1}$. We can also show that it is possible to perform broadcast in series-parallel graphs {and grid graphs} using single-bit labels. In both cases, using the same technique from Section \ref{ackbroadcasting}, acknowledged broadcast is possible using 3 labels. It would also be interesting to determine whether or not acknowledged broadcast is possible in all graphs using 1-bit labels, and, if not, if it is possible using 2-bit labels. {Another open question is whether acknowledged broadcast can be performed with only constant-length 
	messages, instead of $O(\log n)$ bits. The above open questions can also be asked for the case where the source node is not designated when the labeling scheme is applied.}

In this paper, we focused on the feasibility of radio broadcast with short labels, and we did not try to optimize the time complexity. Our algorithm works in time $O(n)$. This yields the following open problem. What is the fastest universal deterministic broadcast algorithm using labeling schemes of constant length?


\bibliographystyle{plain}
\bibliography{labeling}

\end{document}